\definecolor{darkred}  {rgb}{0.5,0,0}
\definecolor{darkblue} {rgb}{0,0,0.5}
\definecolor{darkgreen}{rgb}{0,0.5,0}
\newtheorem{lemma}{Lemma}
\newtheorem{theorem}{Theorem}
\newtheorem{proposition}[theorem]{Proposition}
\theoremstyle{remark}
\newtheorem*{remark}{Remark}
\newcommand{\mbb}{\mathbb}
\newcommand{\mc}{\mathcal}
\newcommand{\tr}{\textrm{Tr}}
\newcommand{\ket}[1]{|#1\rangle}
\newcommand{\bra}[1]{\langle #1|}
\newcommand{\op}[2]{|#1\rangle\langle #2|}
\newcommand{\state}[1]{\ket{#1}\bra{#1}}
\definecolor{cool_green}{rgb}{0.0, 0.5, 0.0}
\newcommand{\eric}[1]{{\color{cool_green} #1}}
\title{Capacities of Entanglement Distribution From a Central Source}
\author{Xinan Chen, Stefano Chessa, Ian George, Felix Leditzky, and Eric Chitambar \\
\textit{University of Illinois at Urbana-Champaign, Urbana, IL 61801, USA}}
\begin{document}
\maketitle


\begin{abstract}
    Distribution of entanglement is an essential task in quantum information processing and the realization of quantum networks. In our work, we theoretically investigate the scenario where a central source prepares an $N$-partite entangled state and transmits each entangled subsystem to one of $N$ receivers through noisy quantum channels. The receivers are then able to perform local operations assisted by unlimited classical communication to distill target entangled states from the noisy channel output. In this operational context, we define the EPR distribution capacity and the GHZ distribution capacity of a quantum channel as the largest rates at which Einstein-Podolsky-Rosen (EPR) states and Greenberger-Horne-Zeilinger (GHZ) states can be faithfully distributed through the channel, respectively. We establish lower and upper bounds on the EPR distribution capacity by connecting it with the task of assisted entanglement distillation. We also construct an explicit protocol consisting of a combination of a quantum communication code and a classical-post-processing-assisted entanglement generation code, which yields a simple achievable lower bound for generic channels.
    As applications of these results, we give an exact expression for the EPR distribution capacity over two erasure channels and bounds on the EPR distribution capacity over two generalized amplitude damping channels. We also bound the GHZ distribution capacity, which results in an exact characterization of the GHZ distribution capacity when the most noisy channel is a dephasing channel.
\end{abstract}

\section{Introduction}
\label{sec:intro}

Entanglement distribution is an essential ingredient in realizing the quantum internet \cite{kimble-2008,wehner-2018}, which has the potential of accomplishing tasks such as quantum key distribution \cite{bennett-1984,ekert-1991}, distributed quantum computing \cite{caleffi-2022}, and distributed quantum sensing \cite{zhang-2021}. To successfully implement these protocols, the participating parties, which may be separated by significant spatial distance, need to share high-fidelity entangled states. Typically, the distribution of entanglement to distant parties suffers from noise, deteriorating the quality of entanglement. Therefore, a key research thrust in quantum information science is to study protocols that allow us to distribute high-fidelity entangled states to distant parties. 

In previous works on entanglement distribution, it is frequently assumed that one party prepares an $N$-partite entangled state, holds on to one subsystem which remains intact in a local quantum memory, and sends the other $N-1$ subsystems to the remaining $N-1$ parties through quantum channels. The parties may also use unlimited local operation and classical communication (LOCC) post-processing to enhance the distribution of entanglement. This is illustrated in Fig. \ref{fig:one-sided-diagram}. For instance, in Refs. \cite{bennett1996mixed,wilde2010convolutional}, the distribution of Einstein-Podolsky-Rosen (EPR) states is accomplished by locally generating $n$ copies of EPR states $\phi^{A_1S_2}$ at $A_1$, who sends the $S_2^n$ systems through $n$ uses of a quantum channel $\mc{N}^{S_2\rightarrow A_2}$ to another party $A_2$. Subsequently, $A_1$ and $A_2$ employ either the hashing protocol or the convolutional protocol to distill high-fidelity EPR pairs between them. The convolutional method has further been extended to the distribution of tripartite Greenberger-Horne-Zeilinger (GHZ) states $\ket{\text{GHZ}_3}=1/\sqrt{2}(\ket{000}+\ket{111}) $\cite{rengaswamy2021distilling}. Similarly, this process involves locally generating $n$ copies of $\state{\text{GHZ}_3}^{A_1S_2S_3}$, sending $S_2^n$ and $S_3^n$ to $A_2$ and $A_3$ respectively, and performing a distillation protocol inspired by quantum error correction. Building on this work, Ref. \cite{rengaswamy2024entanglement} adopted quantum low-density parity-check codes \cite{breuckmann2021quantum,panteleev2022asymptotically} to construct a high-rate tripartite GHZ distribution protocol with efficient post-processing. Conversely, if we can prepare arbitrary bipartite entangled states and implement arbitrary LOCC post-processing, Tomamichel et al. provide a fundamental upper bound on the asymptotically achievable rates for distributing EPR states \cite{tomamichel-2017}. 

However, we argue that this previously adopted model is inadequate for the following two reasons. First, the assumption that the local quantum memory is noiseless is hardly justified in practice. Secondly, we envision that, in the near-term, the ability to generate and distribute quantum entanglement may be localized to a few central stations, which send the entangled states to users through quantum channels whenever they request them. Therefore, in our work, we instead consider a more practical scenario where a source prepares an $N$-partite entangled state $\rho_{\mathrm{init}}^{S_1\cdots S_N}$ and sends subsystem $S_i$ to receivers $A_i$ through quantum channels $\mc{N}_i^{S_i\rightarrow A_i}$. After receiving the noisy output state, the receivers perform LOCC amongst themselves to distill the desired target state $\tau$ from the noisy output state (Fig~\ref{fig:multi-sided-diagram}).

The focus of the present paper is to investigate the ultimate entanglement distribution rate in this setting, characterized by the best asymptotic rate achievable when the source is able to use each channel an unbounded number of times.  In this work, we choose $\tau$ to be the $N$-partite GHZ state $\ket{\mathrm{GHZ}_N}=1/\sqrt{2}(\ket{0}^{\otimes N}+\ket{1}^{\otimes N})$. While there is no ``maximally entangled" state in the multipartite setting \cite{acin-2002,walter-2016}, GHZ states are still an important class of states to consider because of their nonlocality properties \cite{greenberger-1990,mermin-1990} and their applications in quantum information tasks such as quantum secret sharing \cite{hillery-1999}, quantum conference key agreement \cite{murta-2020}, and quantum sensing \cite{degen-2017}. For these reasons, the distillation of GHZ states is an important question that is under active research \cite{rengaswamy2021distilling,rengaswamy2024entanglement,vrana-2019,salek-2022,salek-2023,Frantzeskakis2023,vairogs2024localizing}. For example, Salek and Winter recently proposed a GHZ distillation protocol by ``making coherent" the classical common randomness distillation protocol. Moreover, when $N=2$, the task reduces to distributing EPR pairs across two receivers, which is an interesting scenario in its own right. We will call the largest rate at which bipartite maximally entangled states can be distributed through channels $\mc{N}_1$ and $\mc{N}_2$ the EPR distribution capacity of $\mc{N}_1$ and $\mc{N}_2$ and denote it by $E(\mc{N}_1,\mc{N}_2)$. Similarly, the largest rate at which GHZ states can be distributed through $\mc{N}_1,\cdots,\mc{N}_N$ will be called the GHZ distribution capacity of these channels, denoted by $E(\mc{N}_1,\cdots,\mc{N}_N)$.

\begin{figure}
    \centering
    \subfloat[]{\includegraphics[width=0.49\textwidth]{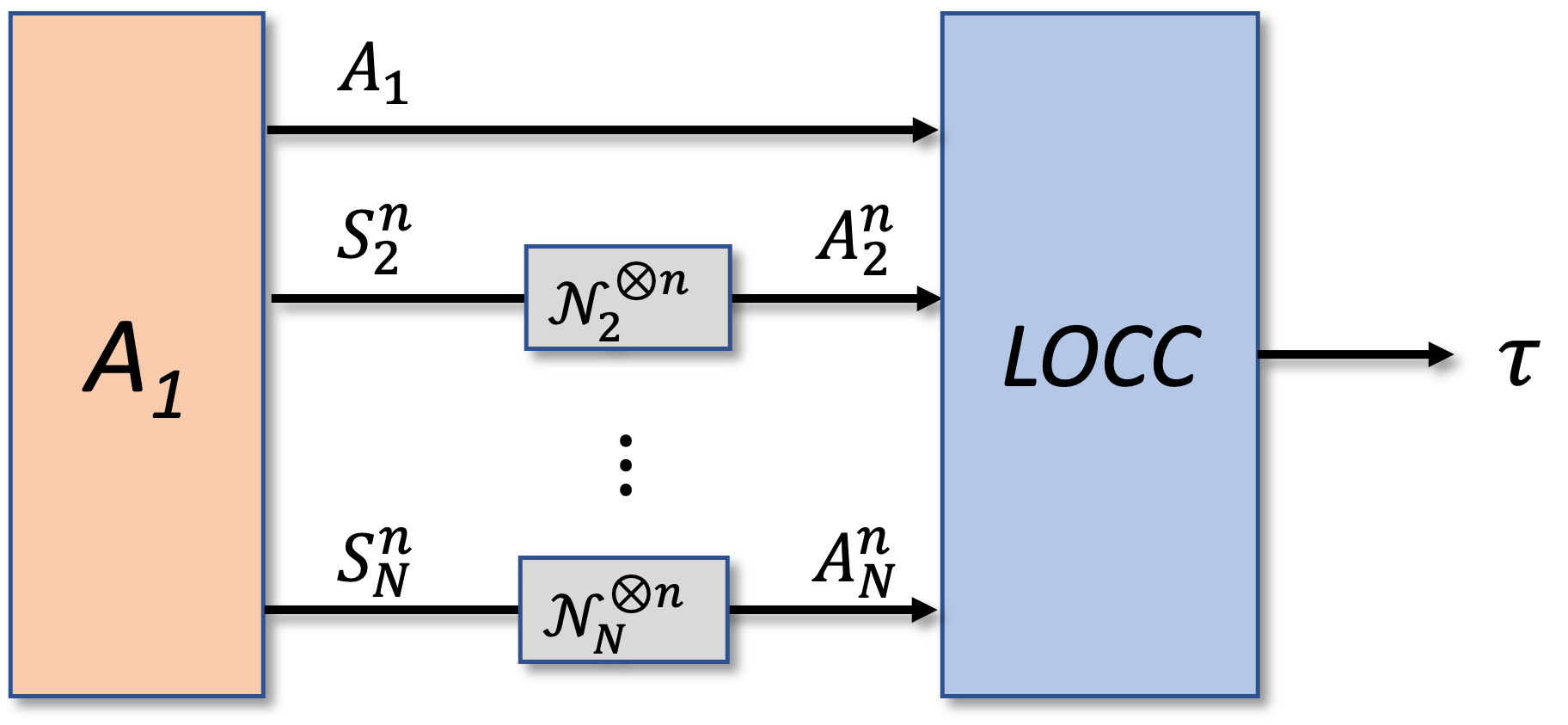}\label{fig:one-sided-diagram}}
    \hfill
    \subfloat[]{\includegraphics[width=0.49\textwidth]{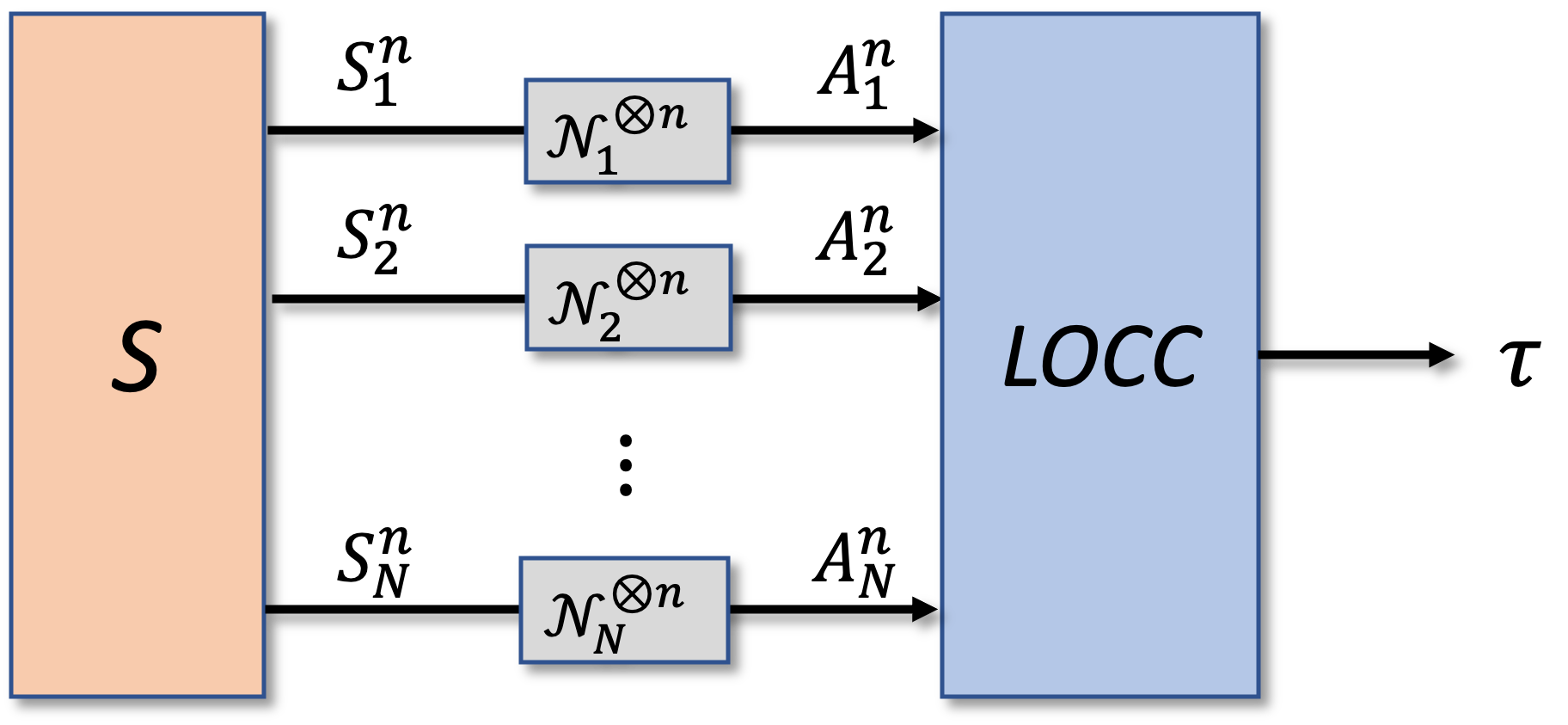}\label{fig:multi-sided-diagram}}
    \caption{Two entanglement distribution schemes. (a) One of the parties, say $A_1$, prepares an entangled state $\rho^{A_1S_2^n\cdots S_N^n}$ and sends $S_i^n$ through $n$ uses of the quantum channel $\mc{N}_i$ to $A_i$ $(i=2,3,\cdots,N)$. Then $A_i$ perform LOCC among themselves to obtain the target state $\tau$. (b) Our entanglement distribution model. The source $S$ prepares an entangled state $\rho^{S_1^n\cdots S_N^n}$, sending the $S_i^n$ subsystem to $A_i$ via $n$ uses of the channel $\mc{N}_i$. Then the receivers perform LOCC to obtain the target state $\tau$.}
\end{figure}

The main goal of this paper is to characterize the optimal $N$-party GHZ distribution rates over noisy channels, especially for the case of $N=2$. In Section \ref{sect:preliminaries}, we provide preliminary information on quantum capacities, entanglement distillation, and simulation of a quantum channel by its Choi states, all of which are useful for the exposition of our results in later sections. We also provide a detailed definition of the entanglement distribution capacity in Section \ref{sect:entanglement-distribution-definition}. Our first contribution is the connection between the task of distributing EPR states to $N=2$ parties with the task of assisted entanglement distillation \cite{divincenzo1999entanglement,smolin2005entanglement,dutil2011assisted,buscemi2012general,pollock2021entanglement} (Section \ref{sect:duality}). Specifically, we show that for general channels $\mc{N}_1$ and $\mc{N}_2$, the regularized entanglement of assistance of the Choi states offers a lower bound for $E(\mc{N}_1,\mc{N}_2)$. For teleportation-covariant channels, this lower bound is tight and exactly equals the entanglement distribution rate. As an application of this result, we exactly characterize the entanglement distribution capacity when the two channels are both erasure channels in Section \ref{sect:erasure-channels}. In Section \ref{sect:explicit-protocol}, we provide another lower bound for $E(\mc{N}_1,\mc{N}_2)$ by combining quantum communication codes with entanglement generation codes assisted by classical-post-processing (CPP). This bound is tight when the quantum capacity of one channel is greater than the CPP-assisted entanglement generation capacity of the other channel and not tight in general. In particular, for the case of EPR distribution over generalized amplitude damping channels (GADCs), we can enhance the achievable rates in the high-noise regime by using multi-rail encoding (Section \ref{sect:GADC}). In Section \ref{sect:multipartite}, we derive general bounds for GHZ distribution, which yields an exact expression of GHZ distribution capacity when the most noisy channel is a dephasing channel. Finally, Section \ref{sect:conclusion} discusses the implications of our results and concludes the paper.

\section{Preliminaries}\label{sect:preliminaries}
In this section, we review basic concepts of quantum communication, entanglement generation with classical post-processing, assisted entanglement distillation, and simulation of teleportation covariant channels. Then we proceed to precisely define the primary focus of our paper: entanglement distribution through quantum channels $\mc{N}_1,\mc{N}_2,\cdots,\mc{N}_N$.

\subsection{Quantum Communication and Quantum Capacity}
Given a quantum channel $\mc{N}^{A \rightarrow B}$, an $(n,Q_n,\epsilon_n)$ quantum communication code consists of an encoder $\mc{E}^{S\rightarrow A^n}$ where $\frac{1}{n}\log\dim S = Q_n$ and a decoder $\mc{D}^{B^n\rightarrow S}$ such that 
\begin{align}
    \left\|\psi^{RS} - \mathrm{id}^R\otimes\left(\mc{D}^{B^n\rightarrow S}\circ\mc{N}^{\otimes n}\circ\mc{E}^{S\rightarrow A^n}\left(\psi^{RS}\right)\right) \right\|_1 \leq \epsilon_n
\end{align}
for all pure states $\psi^{RS}$. A rate $Q$ is achievable if there exists a sequence of $(n,Q_n,\epsilon_n)$ quantum communication codes such that $\lim_{n\rightarrow\infty} Q_n = Q$ and $\lim_{n\rightarrow\infty} \epsilon_n = 0$. The quantum capacity of the channel $\mc{N}$, denoted by $Q(\mc{N})$, is the supremum of all achievable rates. It has been established that the quantum capacity of a quantum channel is characterized by its coherent information. The coherent information of a bipartite state $\rho^{AB}$ is given by \cite{wilde-2013}
\begin{align}
    I_c(A\rangle B)_\rho = S(B)-S(AB)\label{eq:coh-info}
\end{align}
where $S(AB)$ and $S(B)$ are the von Neumann entropies of the state $\rho^{AB}$ and the reduced state $\rho^B$ respectively. 
The coherent information of a quantum channel $\mc{N}^{A'\rightarrow B}$ is given by maximizing \eqref{eq:coh-info} over all states of the form $\rho^{AB} = \text{id}^A\otimes\mc{N}^{A'\rightarrow B}(\psi^{AA'})$ with pure input $\psi^{AA'}$:
\begin{align}
    I_c(\mc{N}) = \max_{\psi^{AA'}} I_c(A\rangle B)_{\rho}
\end{align}
Then, the quantum capacity of a quantum channel $\mc{N}$ is given by \cite{lloyd-1997,shor-2002,devetak-2005-private}:
\begin{align}
    Q(\mc{N}) = \lim_{n\rightarrow\infty} \frac{1}{n} I_c(\mc{N}^{\otimes n}) \geq I_c(\mc{N}).
\end{align}

\subsection{Entanglement Generation Assisted by Classical Post-Processing (CPP)}
Given a quantum channel $\mc{N}^{A\rightarrow B}$, an $(n,Q_n,\epsilon_n)$ entanglement generation code assisted by CPP consists of an encoding operation $\mc{E}^{A_0\rightarrow\Tilde{A}A^n}$ where $A_0\cong\mathbb{C}$ is trivial and an LOCC post-processing $\mc{L}^{\tilde{A}B^n\rightarrow\hat{A}\hat{B}}$ such that 
\begin{align}
    \left\|\phi^{\hat{A}\hat{B}} - \mc{L}^{\tilde{A}B^n\rightarrow\hat{A}\hat{B}} \circ \mc{N}^{\otimes n} \circ \mc{E}^{A_0\rightarrow\tilde{A}A^n} \right\|_1 \leq \epsilon_n
\end{align}
where $\phi$ is the maximally entangled state with Schimdt rank $2^{nQ_n}$. A rate $Q$ is achievable assisted by CPP if there exists a sequence of $(n,Q_n,\epsilon_n)$ CPP-assisted entanglement generation codes such that $\lim_{n\rightarrow\infty} Q_n = Q$ and $\lim_{n\rightarrow\infty} \epsilon_n = 0$. The CPP assisted entanglement generation capacity of the channel $\mc{N}$, denoted by $E_{cpp}(\mc{N})$, is the supremum of all achievable entanglement generation rates assisted by CPP. A lower bound of $E_{cpp}(\mc{N})$ is the reverse coherent information \cite{garcia-patron-2009}, which is defined by
\begin{align}
    I_R(\mc{N}) = \max_{\psi^{AA'}} I_c(B\rangle A)_\rho
\end{align}
where again $\rho^{AB} = \text{id}^A\otimes\mc{N}^{A'\rightarrow B}(\psi^{AA'})$. Conversely, an upper bound is given by the Rains information of the channel. The Rains relative entropy of a bipartite state $\rho^{AB}$ is given by \cite{rains-2001}
\begin{align}
    R(A;B)_\rho = \inf_{\sigma\in\mathrm{PPT}'} D(\rho\|\sigma)\label{eq:rains-rel-ent}
\end{align}
where $D(\rho\|\sigma) = \tr\rho(\log\rho-\log\sigma)$ is the quantum relative entropy and $\mathrm{PPT'} = \{\sigma^{AB}:\sigma\geq0, \|\sigma^{T_B}\|_1\leq1\}$ with $T_B$ denoting the partial transpose on system $B$. We can similarly define the Rains information of a quantum channel by optimizing \eqref{eq:rains-rel-ent} over states of the form $\rho^{AB} = \text{id}^A\otimes\mc{N}^{A'\rightarrow B}(\psi^{AA'})$ with $\psi^{AA'}$ pure \cite{tomamichel-2017}:
\begin{align}
    R(\mc{N}) = \max_{\psi^{AA'}} R(A;B)_\rho.
\end{align}

\subsection{Assisted Entanglement Distillation}
Suppose $n$ copies of a tripartite mixed state $\rho^{ABC}$ are shared between $A$, $B$, and a helper $C$, and the goal is to distill bipartite entanglement between $A$ and $B$ with the help of $C$, who assists by performing a measurement on his system and broadcasting the result to $A$ and $B$ (Fig.~\ref{fig:ass-ent-dist}). More precisely, a $(n,Q_n,\epsilon_n)$-assisted entanglement distillation protocol with $C$ being the helper consists of a POVM $\{M_i^{C^n}\}_i$ and the corresponding LOCC operations $\mc{L}_i^{A^nB^n\rightarrow\hat{A}\hat{B}}$ such that
\begin{align}
    \left\|\phi^{\hat{A}\hat{B}} - \sum_i \mc{L}_i^{A^nB^n\rightarrow\hat{A}\hat{B}}\left(\tr_{C^n}\left[M_i^{C^n}\left(\rho^{ABC}\right)^{\otimes n}\right]\right)\right\|_1 \leq \epsilon_n
\end{align}
where $\phi$ is the maximally entangled state with Schimdt rank $2^{nQ_n}$. A rate $Q_n$ is achievable if there exists a sequence of $(n,Q_n,\epsilon_n)$-assisted entanglement distillation protocols such that $\lim_{n\rightarrow\infty} Q_n = Q$ and $\lim_{n\rightarrow\infty} \epsilon_n = 0$. Dutil and Hayden showed that the supremum of all achievable rates is given by \cite{dutil2011assisted}
\begin{align}
    D_A^\infty(\rho^{ABC}) = \lim_{n\rightarrow\infty} \frac{1}{n} D_A\left(\left(\rho^{ABC}\right)^{\otimes n}\right)
\end{align}
where
\begin{align}
    D_A(\rho^{ABC}) = \sup_{\{M_i\}_i} \left\{\sum_i p_iD(\sigma^{AB}_i): p_i=\tr(M_i^C\rho^{ABC}),\ \sigma_i = \frac{1}{p_i}\tr_C(M_i^C\rho^{ABC})\right\},
\end{align}
where the supremum is taken over all POVMs $\{M_i\}_i$ on $C$. We will call $D_A(\rho)$ the entanglement of assistance and $D_A^\infty(\rho)$ the asymptotic entanglement of assistance.
\begin{figure}
    \centering
    \subfloat[]{\includegraphics[width=0.49\linewidth]{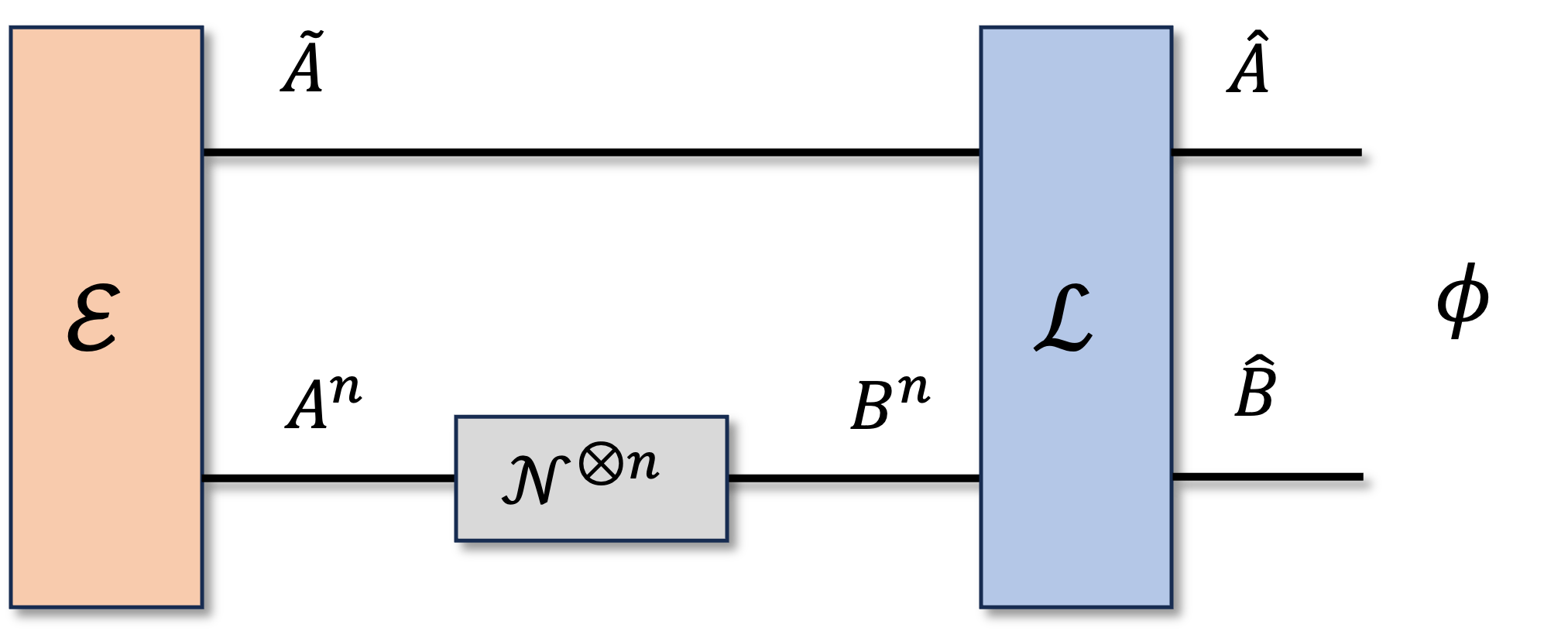}\label{fig:E_cpp}}
    \hfill
    \subfloat[]{\includegraphics[width=0.5\linewidth]{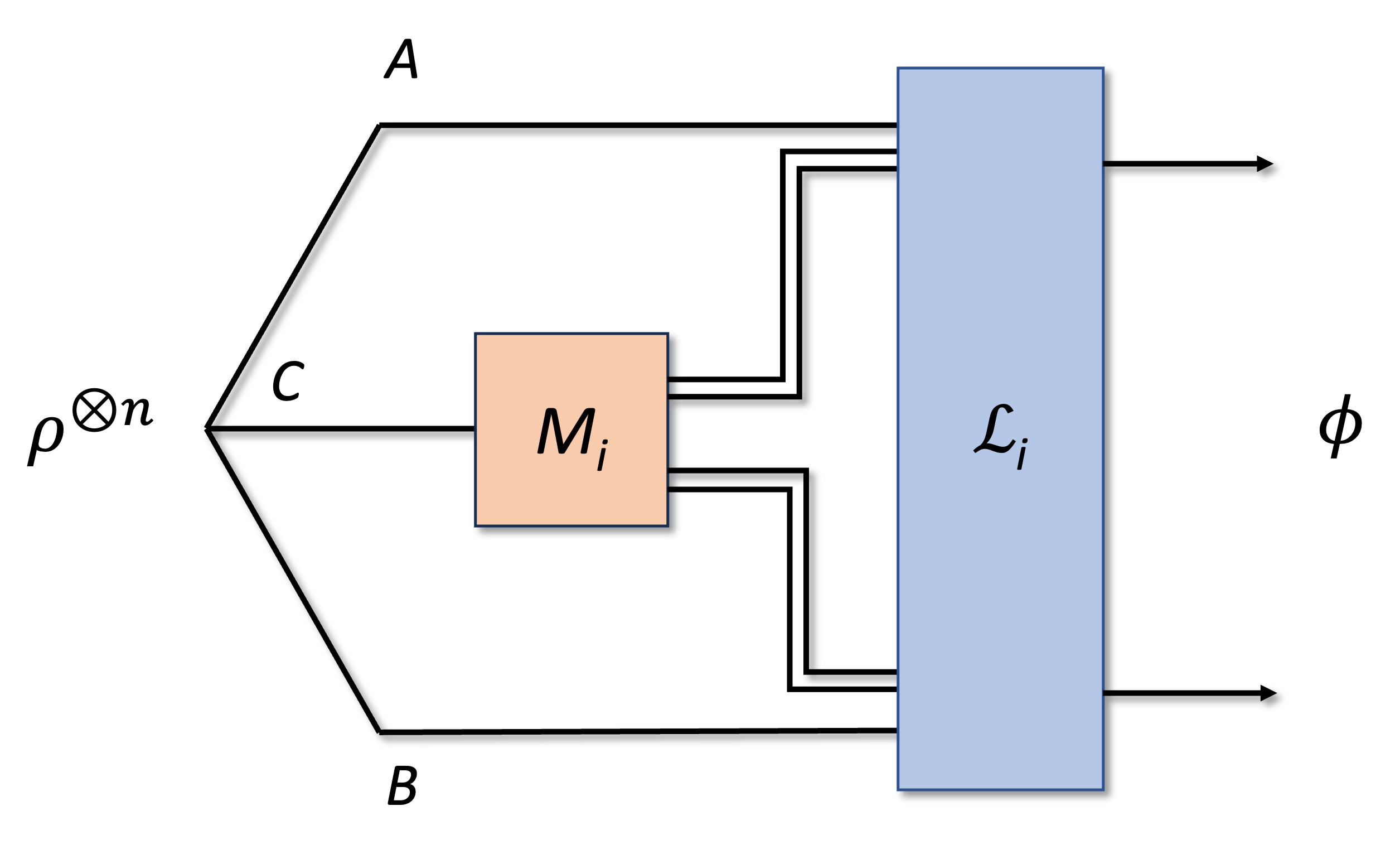}\label{fig:ass-ent-dist}}
    \caption{(a) A diagram for entanglement generation with CPP assistance. (b) A diagram for assisted entanglement distillation.}
    
\end{figure}

\subsection{Simulating Teleportation-Covariant Channels}
Consider a qudit quantum teleportation protocol \cite{bennett-1993}. A $d$-dimensional maximally entangled state $\phi_d^{AB}$ is shared between two parties $A$ and $B$. In addition, $A$ possesses an arbitrary quantum state $\rho^{A'}$. Suppose $A$ performs a measurement on $AA'$ in the basis $\left\{\mbb{I}^A\otimes\left(X_d^aZ_d^b\right)^{\dagger A'}\ket{\phi_d}^{AA'}\right\}_{a,b}$. Here $X_d$ is the shift unitary defined by $X_d\ket{k} = \ket{k\oplus1}$ where $\oplus$ denotes addition modulo $d$, and the clock unitary $Z_d$ is defined by $Z_d\ket{k} = e^{i2\pi k/d}\ket{k}$. The set of unitaries of the form $U_{ab} = X_d^aZ_d^b$ are called generalized Pauli unitaries. After the measurement, the state on $B$ becomes $X_d^aZ_d^b\rho(X_d^aZ_d^b)^\dagger$, depending on the measurement result $a,b$. Once $B$ learns the measurement result from $A$, the state can be corrected back to $\rho$ by performing the appropriate unitary. 

A qudit quantum channel $\mc{N}^{A\rightarrow B}$ is called teleportation-covariant if for any generalized Pauli unitary $U^{A}$ and for any input state $\rho^{A}$, we have 
\begin{align}
    \mc{N}^{A\rightarrow B}(U^A\rho^A U^{\dagger A}) = V^B\mc{N}^{A\rightarrow B}(\rho^A)V^{\dagger B}
\end{align}
for some unitary $V^{B}$. Teleportation-covariant channels can be simulated by their Choi state $J_{\mc{N}}^{AB}=\mathrm{id}^A\otimes\mc{N}^{A'\rightarrow B}(\phi_d^{AA'})$ and one-way LOCC \cite{pirandola2017fundamental}. To see this, note that if we replace $\phi_d$ with $J_{\mc{N}}$ in the teleportation protocol, the state on $B$ after measurement on $AA'$ becomes $\mc{N}(U_{ab}\rho U_{ab}^\dagger)$. By teleportation covariance this is equivalent to $V_{ab}\mc{N}(\rho)V_{ab}^\dagger$ for some unitary $V_{ab}$ depending on the measurement result. Once $B$ learns the measurement result, one can apply an appropriate unitary to correct the state back to $\mc{N}(\rho)$.

\subsection{Entanglement Distribution from a Central Source}\label{sect:entanglement-distribution-definition}
In this section, we rigorously define EPR and GHZ distribution capacities, which are the focal points in our work. 
Given $N$ quantum channels $\{\mc{N}_i^{S_i\rightarrow A_i}\}_{i=1}^N$, an $(n,Q_n,\epsilon_n)$-$\tau$ distribution code consists of an encoding operation $\mc{E}^{S_0\rightarrow S_1^n\cdots S_N^{\otimes n}}$
where $S_0\cong\mathbb{C}$ is trivial and an LOCC post-processing $\mc{L}^{A_1^{\otimes n}\cdots A_N^{\otimes n}\rightarrow\widehat{A_1}\cdots\widehat{A_N}}$ such that
\begin{align}
    \left\| \left(\tau^{\otimes k}\right)^{\widehat{A_1}\cdots\widehat{A_N}} - \mc{L} \circ \left(\mc{N}_1^{\otimes n}\otimes\cdots\otimes\mc{N}_N^{\otimes n}\right) \circ \mc{E}\right\|_1 \leq \epsilon_n
\end{align}
where $Q_n=k/n$. A pictorial description can be found in Fig.~\ref{fig:multi-sided-diagram}. We will focus on the case where $\tau=\state{\mathrm{GHZ}_N}$. A rate $Q$ is achievable if there exists an $(n,Q_n,\epsilon_n)$-GHZ distribution code such that $\lim_{n\rightarrow\infty} Q_n = Q$ and $\lim_{n\rightarrow\infty} \epsilon_n = 0$. The GHZ distribution capacity $E(\mc{N}_1,\cdots,\mc{N}_N)$ is the supremum of all achievable rates. When $N=2$, this similarly defines the EPR distribution capacity $E(\mc{N}_1,\mc{N}_2)$.

\section{Connection between Entanglement Distribution and Assisted Entanglement Distillation}\label{sect:duality}
We begin with the simplest case where $N=2$. Our first result is a connection between EPR distribution capacity $E(\mc{N}_1,\mc{N}_2)$ and the assisted entanglement distillation rate on the Choi states $D_A^\infty\left(J_{\mc{N}_1}^{A_1S_1}\otimes J_{\mc{N}_2}^{A_2S_2}\right)$, where $S$ acts as the assistant party. 

\begin{theorem}
    $E(\mc{N}_1,\mc{N}_2) \geq D_A^\infty\left(J_{\mc{N}_1}^{A_1S_1}\otimes J_{\mc{N}_2}^{A_2S_2}\right)$. Moreover, if both $\mc{N}_1$ and $\mc{N}_2$ are teleportation-covariant, then $E(\mc{N}_1,\mc{N}_2) = D_A^\infty\left(J_{\mc{N}_1}^{A_1S_1}\otimes J_{\mc{N}_2}^{A_2S_2}\right)$.
\end{theorem}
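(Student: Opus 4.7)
The plan is to prove the two parts separately. The lower bound $E(\mc{N}_1,\mc{N}_2)\geq D_A^\infty(J_{\mc{N}_1}^{A_1S_1}\otimes J_{\mc{N}_2}^{A_2S_2})$ will hold for arbitrary channels, while the matching upper bound under teleportation covariance will follow from the channel-simulation formalism reviewed in Section~\ref{sect:preliminaries}.

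For the lower bound, I will use a pre-selection trick that converts any assisted-distillation POVM into a deterministic input choice for the distribution protocol. Fix $n$ and a POVM $\{M_i^{S_1^nS_2^n}\}_i$ on the source's halves of $n$ copies of $J_{\mc{N}_1}^{A_1S_1}\otimes J_{\mc{N}_2}^{A_2S_2}$, with outcome distribution $\{p_i,\sigma_i^{A_1^nA_2^n}\}$. Let $i^*\in\arg\max_i D(\sigma_i)$, where $D$ denotes distillable entanglement. The standard Choi identity
\begin{align*}
    \tr_A\bigl[M^A\,(\mathrm{id}^A\otimes\mc{N}^{A'\rightarrow B})(\phi_d^{AA'})\bigr]=\mc{N}(M^T/d)
\end{align*}
extends by tensor product to show that $\sigma_{i^*}=(\mc{N}_1^{\otimes n}\otimes\mc{N}_2^{\otimes n})(\tau_{i^*})$ with the explicit input $\tau_{i^*}=M_{i^*}^T/\tr(M_{i^*})$, which is a valid density operator since $M_{i^*}\geq0$. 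The source therefore prepares $\tau_{i^*}$ deterministically and sends it through the channels, producing one copy of $\sigma_{i^*}$ per $n$ channel uses. The receivers then run an entanglement-distillation protocol on many independent copies of $\sigma_{i^*}$ to achieve rate $D(\sigma_{i^*})$ per copy, i.e.\ $D(\sigma_{i^*})/n$ per channel use. Since $D(\sigma_{i^*})=\max_i D(\sigma_i)\geq\sum_i p_i D(\sigma_i)$, optimizing over POVMs gives $E(\mc{N}_1,\mc{N}_2)\geq D_A((J_{\mc{N}_1}\otimes J_{\mc{N}_2})^{\otimes n})/n$, and letting $n\to\infty$ yields the desired lower bound.

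For the upper bound under teleportation covariance, I will invoke the channel simulation recalled in Section~\ref{sect:preliminaries}. Starting from any $(n,Q_n,\epsilon_n)$ EPR distribution protocol with encoding $\mc{E}$ and LOCC post-processing $\mc{L}$, I replace every channel use of $\mc{N}_i$ by its teleportation simulation: pre-share $J_{\mc{N}_i}$ between source and receiver $A_i$, Bell-measure the source's prepared input against its half of $J_{\mc{N}_i}$, broadcast the outcome, and apply the corresponding corrective unitary at the receiver (which is guaranteed to exist and remain a unitary on the output system by teleportation covariance). Doing this for all $2n$ channel uses converts the distribution protocol into one that starts with $n$ copies of $J_{\mc{N}_1}^{A_1S_1}\otimes J_{\mc{N}_2}^{A_2S_2}$, has the source (playing the role of the helper) perform a POVM -- namely the composition of $\mc{E}$ with the Bell measurements -- and broadcast its classical outcomes, after which the receivers apply the corrective unitaries followed by $\mc{L}$ to extract $nQ_n$ EPR pairs. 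This is precisely an assisted entanglement distillation protocol on $(J_{\mc{N}_1}\otimes J_{\mc{N}_2})^{\otimes n}$, so its rate is bounded by $D_A^\infty(J_{\mc{N}_1}\otimes J_{\mc{N}_2})$, giving the matching upper bound.

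The main subtlety is the nested asymptotics in the lower bound: the outer optimization runs over $n$ and POVMs producing the single conditional state $\sigma_{i^*}$, whereas the inner limit hidden inside $D(\sigma_{i^*})$ requires many copies of $\sigma_{i^*}$ (each built from $n$ channel uses). This is handled by the standard fact that $\lim_n D_A(\rho^{\otimes n})/n=\sup_n D_A(\rho^{\otimes n})/n$ (using superadditivity of $D_A$ under tensor products), together with a diagonal-sequence argument combining the distillation blocklength with $n$. A secondary point worth verifying is that the ``broadcast and correct'' step of the teleportation simulation can be fully absorbed into the helper-broadcast plus receiver-LOCC structure of assisted distillation, so that the simulated protocol is a bona fide assisted-distillation protocol in the sense of the preliminaries.
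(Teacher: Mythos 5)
Your proposal is correct and follows essentially the same route as the paper: the lower bound uses the same transpose/Choi identity to turn a helper POVM on the source halves into channel inputs (your "pick the best outcome $i^*$" step just makes explicit the paper's terser observation that a probabilistic mixture of preparations cannot beat the optimal deterministic protocol), and the upper bound uses the same teleportation-simulation argument absorbing the encoder and Bell measurements into the helper's POVM. No gaps of substance.
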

\begin{proof}
    For an arbitrary $n$, a POVM $\{M_i\}_i$ on $S_1^nS_2^n$ yields the ensemble
    \begin{align}
        p_i &= \tr\left\{M_i\left[\left(J_{\mc{N}_1}^{S_1A_1}\right)^{\otimes n}\otimes \left(J_{\mc{N}_2}^{S_2A_2}\right)^{\otimes n}\right]\right\} = \frac{\tr M_i}{d_{S_1}^nd_{S_2}^n}\\
        \sigma_i &= \frac{1}{p_i}\tr_{S_1^nS_2^n}\left\{M_i\left[\left(J_{\mc{N}_1}^{S_1A_1}\right)^{\otimes n}\otimes \left(J_{\mc{N}_2}^{S_2A_2}\right)^{\otimes n}\right]\right\} = \mc{N}_1^{\otimes n}\otimes\mc{N}_2^{\otimes n}\left(\frac{M_i^T}{\tr M_i^T}\right)
    \end{align}
    where $d_{S_i}=\dim S_i$, and we have invoked the transpose property $\mbb{I}^A\otimes M^B\ket{\phi}^{AB}=(M^T)^A\otimes\mbb{I}^B\ket{\phi}^{AB}$. This means that performing a POVM $\{M_i\}$ on $J_{\mc{N}_1}^{\otimes n}\otimes J_{\mc{N}_2}^{\otimes n}$ followed by the corresponding LOCC post-processing $\mc{L}_i$ is equivalent to probabilistically sending states $M_i^T/\tr M_i^T$ through $\mc{N}_1^{\otimes n}\otimes\mc{N}_2^{\otimes n}$ followed by the same post-processing $\mc{L}_i$. The latter is clearly less desirable than deterministically performing the optimal EPR distribution protocol. 

    For the second part of the theorem, recall that if a channel is teleportation-covariant, its action can be simulated by a teleportation protocol with its Choi state as the resource state. Therefore, an EPR distribution protocol with input state $\rho^{S_1^nS_2^n}$ and LOCC post-processing is equivalent to the scheme illustrated in Fig.~\ref{fig:teleportation-covariance}. Since any state that is local to $S$ cannot help distill bipartite entanglement between $A_1$ and $A_2$, this is further equivalent to assisted entanglement distillation on $(J_{\mc{N}_1}\otimes J_{\mc{N}_2})^{\otimes n}$. It then follows that $E(\mc{N}_1,\mc{N}_2) \leq D_A^\infty\left(J_{\mc{N}_1}^{A_1S_1}\otimes J_{\mc{N}_2}^{A_2S_2}\right)$.
\end{proof}
\begin{figure}
    \centering
    \includegraphics[width=0.6\textwidth]{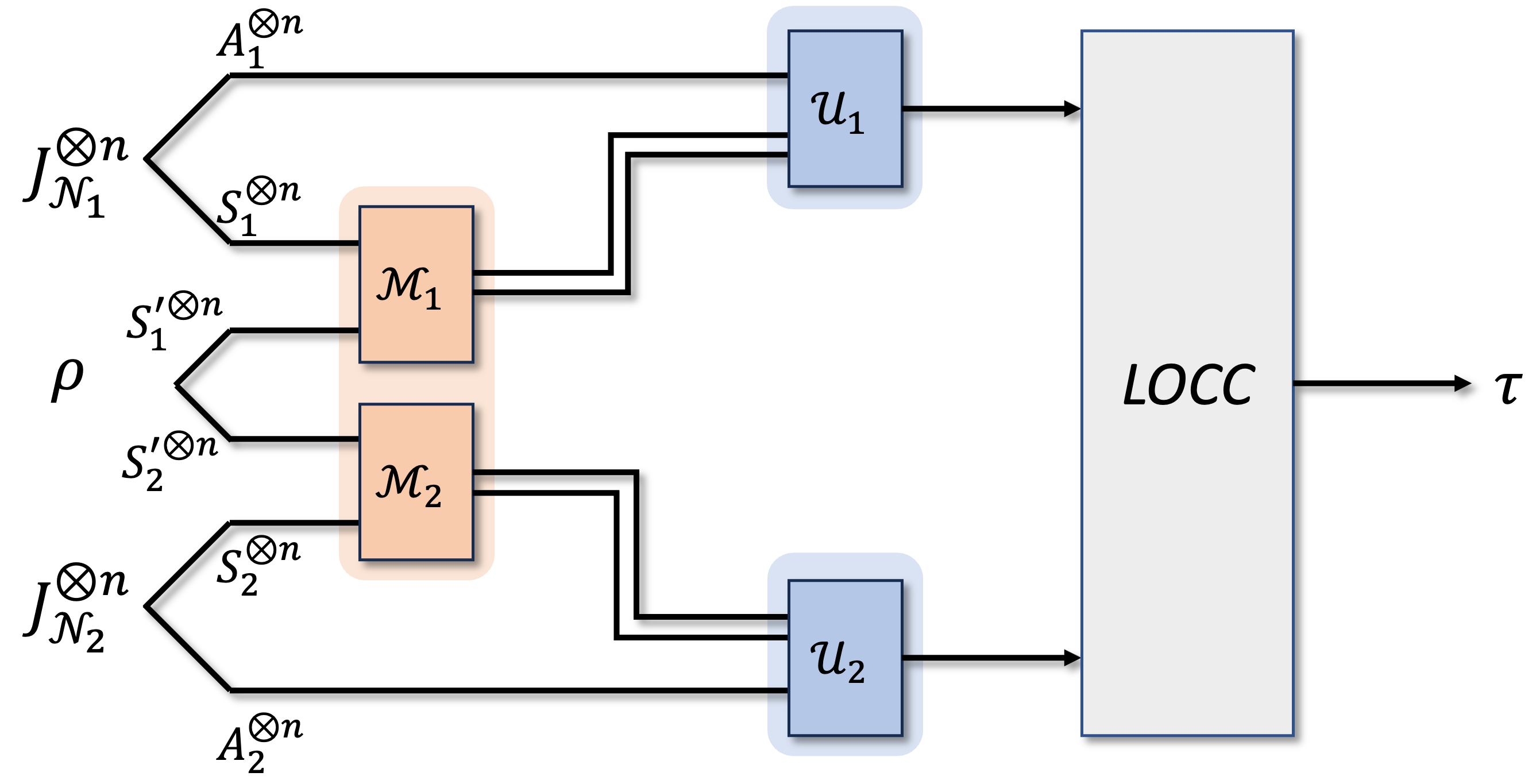}
    \caption{Simulating teleportation-covariant channels with their Choi states.}
    \label{fig:teleportation-covariance}
\end{figure}

In general, the asymptotic entanglement of assistance $D_A^\infty(\rho)$ is very hard to mathematically characterize. To simplify its characterization, we introduce the Rains relative entropy of assistance $R_A(\rho)$ and its regularization, namely the asymptotic Rains relative entropy of assistance $R_A^\infty(\rho)$:
\begin{align}
    R_A^\infty(\rho^{ABC}) &\coloneqq \lim_{n\rightarrow\infty} \frac{1}{n} R_A\left(\left(\rho^{ABC}\right)^{\otimes n}\right) \\
    R_A(\rho^{ABC}) &\coloneqq \sup_{\{E_i\}} \left\{\sum_i p_i R(A;B)_{\sigma_i}: p_i=\tr(E_i^C\rho^{ABC})\ \sigma_i = \frac{1}{p_i}\tr_C(E_i^C\rho^{ABC})\right\}, 
\end{align}
where $R(A;B)_\rho$ is the Rains relative entropy introduced in Section \ref{sect:preliminaries}. By definition $R_A(\rho)\geq D_A(\rho)$ and $R_A^\infty(\rho)\geq D_A^\infty(\rho)$. Desirably, the Rains relative entropy of assistance $R_A(\rho)$ is convex, and we will later apply this property to characterize the EPR distribution capacity of two erasure channels.
\begin{proposition}
    The Rains information of assistance is convex.
\end{proposition}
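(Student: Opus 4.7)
The plan is to show convexity in the state $\rho^{ABC}$ by fixing a POVM on the helper system $C$ and establishing that the resulting objective functional is convex in the input state, then using the fact that a supremum of convex functions is convex. The underlying engine is the convexity of the Rains relative entropy $R(A;B)_\sigma$ in $\sigma$, which follows from joint convexity of the quantum relative entropy combined with convexity of the Rains set $\mathrm{PPT}'$: given optimizers $\sigma_1^\star, \sigma_2^\star$ for states $\rho_1, \rho_2$, the convex combination $\lambda \sigma_1^\star + (1-\lambda)\sigma_2^\star$ is again in $\mathrm{PPT}'$ and may be plugged into the infimum definition of $R$.

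The key step will be to track how the induced ensemble transforms under mixing the input state. Fix a POVM $\{E_i^C\}$ and write $\rho = \lambda \rho_1 + (1-\lambda)\rho_2$ with induced probabilities $p_i, p_i^{(1)}, p_i^{(2)}$ and post-measurement states $\sigma_i, \sigma_i^{(1)}, \sigma_i^{(2)}$. Since $E_i^C$ acts linearly, one has $p_i = \lambda p_i^{(1)} + (1-\lambda) p_i^{(2)}$ and
\begin{align}
    \sigma_i = \frac{\lambda p_i^{(1)}}{p_i}\,\sigma_i^{(1)} + \frac{(1-\lambda)p_i^{(2)}}{p_i}\,\sigma_i^{(2)}.
\end{align}
Applying convexity of $R(A;B)$, multiplying both sides by $p_i$, and summing over $i$ produces
\begin{align}
    \sum_i p_i R(A;B)_{\sigma_i} \leq \lambda \sum_i p_i^{(1)} R(A;B)_{\sigma_i^{(1)}} + (1-\lambda) \sum_i p_i^{(2)} R(A;B)_{\sigma_i^{(2)}}.
\end{align}
The right-hand side is upper bounded by $\lambda R_A(\rho_1) + (1-\lambda) R_A(\rho_2)$ since each sum is a feasible value for its respective supremum. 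Taking the supremum over POVMs on the left yields $R_A(\rho) \leq \lambda R_A(\rho_1) + (1-\lambda) R_A(\rho_2)$, which is the claim.

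The only nontrivial ingredient is convexity of the Rains relative entropy, which I would either cite or briefly justify as indicated above; everything else is a direct consequence of linearity of the measurement statistics in $\rho$. I do not anticipate any real obstacle, but I would be careful to note that the argument extends to continuous POVMs if the supremum is not attained, and that one should verify the finiteness of $R$ on the relevant states so the inequalities are well defined. If desired, the identical argument also proves convexity of the regularized quantity $R_A^\infty$, since it preserves convexity under the limit (the single-letter convexity on each $n$-fold tensor power suffices, applied to the convex combination $\rho^{\otimes n}$-level objects only when both sides happen to be product states — but the plain single-letter convexity of $R_A$ is all that is invoked later in the erasure application).
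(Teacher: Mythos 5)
Your proof is correct and follows essentially the same route as the paper's: fix a POVM on $C$, use linearity of the measurement statistics to write each conditional state $\sigma_i$ as a convex combination of the conditional states of the components, invoke convexity of the Rains relative entropy, and then take the supremum over POVMs (your explicit remark that the right-hand side is merely a feasible value for each component's supremum is a slightly more careful phrasing of the paper's ``optimizing over all POVMs on both sides''). The paper works with a general finite decomposition $\rho=\sum_j q_j\rho_j$ rather than a two-term mixture, but this is an immaterial difference.
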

\begin{proof}
    Consider an arbitrary decomposition $\rho^{ABC}=\sum_j q_j\rho_j^{ABC}$ and an arbitrary POVM $\{M_i\}_i$ on $C$. Supposing that $p_i=\tr(M_i^C\rho^{ABC})$, the measurement yields states
    \begin{align}
        \sigma_i^{AB} = \frac{1}{p_i} \tr_C(M_i^C\rho^{ABC}) &= \frac{1}{p_i} \sum_j q_j \tr_C(M_i^C\rho_j^{ABC}) = \sum_j \frac{q_jp_{i|j}}{p_i} \sigma_{ij}^{AB}
    \end{align}
    where $p_{i|j}=\tr(M_i^C\rho_j^{ABC})$ and $\sigma_{ij}^{AB} = \tr_C(M_i^C\rho_j^{ABC})/\tr(M_i^C\rho_j^{ABC})$. Therefore,
    \begin{align}
        \sum_i p_i R(A;B)_{\sigma_i} \leq \sum_i p_i \sum_j \frac{q_jp_{i|j}}{p_i} R(A;B)_{\sigma_{ij}} = \sum_{i,j} q_j p_{i|j} R(A;B)_{\sigma_{ij}} 
    \end{align}
    where the inequality follows from the convexity of Rains information. Optimizing over all POVMs on both sides of the inequality we obtain 
    \begin{align}
        R_A(\rho^{ABC}) \leq \sum_j q_j R_A(\rho_j^{ABC}),
    \end{align}
    which concludes the proof.
\end{proof}


\subsection{Application to Erasure channels}\label{sect:erasure-channels}
Consider qudit erasure channels of the form $\mc{E}_p(\rho) = (1-p)\rho + p\state{e}$, where $\ket{e}$ is a flag state orthogonal to the input space, signaling the erasure of the input. As an application of the previous results, we exactly characterize the EPR distribution capacity of two qudit erasure channels.
\begin{theorem}
    $E(\mc{E}_{p_1},\mc{E}_{p_2}) = (1-p_1)(1-p_2)\log d$.
\end{theorem}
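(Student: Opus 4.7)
The plan is to prove matching lower and upper bounds of value $(1-p_1)(1-p_2)\log d$.

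\textbf{Achievability.} The source transmits $n$ copies of $|\phi_d\rangle^{S_1 S_2}$, one half through each channel. Since the flag $|e\rangle$ is orthogonal to the input subspace, each receiver can locally detect which of their halves were erased, and one round of classical communication isolates the pairs on which both halves survived. This subset has binomially distributed size with mean $n(1-p_1)(1-p_2)$, so with vanishing error the receivers retain approximately $n(1-p_1)(1-p_2)\log d$ perfect ebits, attaining the target rate.

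\textbf{Converse.} For the upper bound I would bound the Rains relative entropy $R$ of the channel output directly, using that distillable entanglement under LOCC is upper bounded by $R$. Expanding $\mc{E}_{p_1}^{\otimes n}$ into a mixture over Alice's erasure patterns $T_1\subseteq[n]$ with weights $(1-p_1)^{|T_1|}p_1^{n-|T_1|}$, convexity of $R$ (together with the fact that the erasure flag is local and therefore contributes nothing to Rains) gives
\[
R\!\left(\mc{E}_{p_1}^{\otimes n}\otimes \mc{E}_{p_2}^{\otimes n}(\rho)\right)\leq \sum_{T_1}(1-p_1)^{|T_1|} p_1^{n-|T_1|}\, R\!\left(\mathrm{id}^{S_1^{T_1}}\otimes \mc{E}_{p_2}^{\otimes n}(\sigma_{T_1})\right),
\]
where $\sigma_{T_1}=\mathrm{tr}_{S_1^{T_1^c}}\rho$ has Alice-side dimension $d^{|T_1|}$.

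\textbf{Key lemma.} The crux is the estimate
\[
R\!\left(\mathrm{id}^A\otimes \mc{E}_{p_2}^{\otimes n}(\sigma^{AB^n})\right)\leq |T_1|(1-p_2)\log d\quad\text{whenever }\dim A = d^{|T_1|}.
\]
I would prove it by writing $\sigma$ as a convex combination of states reachable from $|\phi_d\rangle^{\otimes |T_1|}$ (shared across $A$ and $|T_1|$ of the $B$-subsystems) plus a Bob-local ancilla via one-way LOCC, a reduction guaranteed by Nielsen's theorem for pure-state decompositions (the uniform Schmidt vector of $|\phi_d\rangle^{\otimes |T_1|}$ is majorized by any Schmidt vector of rank at most $d^{|T_1|}$). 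The covariance identity $\mc{E}_p\circ \mc{O} = \mc{O}^{\mathrm{ext}}\circ\mc{E}_p$, valid for any input-preserving CPTP map $\mc{O}$ extended by the identity on $|e\rangle$, lets me commute $\mc{E}_{p_2}^{\otimes n}$ past the Bob-local LOCC; since the extended map is still LOCC, Rains monotonicity and convexity reduce the bound to $R\bigl((\mathrm{id}\otimes\mc{E}_{p_2}(\phi_d))^{\otimes |T_1|}\bigr)=|T_1|(1-p_2)\log d$, where the equality uses additivity of Rains on this flagged state. Substituting into the previous display and using $\mathbb{E}[|T_1|]=n(1-p_1)$ yields the desired bound.

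\textbf{Main obstacle.} The most delicate step is the commutation argument: I must verify that extending every Bob-input local operation inside the LOCC by the identity on $|e\rangle$ still yields a bona fide LOCC map so that Rains monotonicity applies, and extend the Nielsen-style reduction to mixed $\sigma$ through an appropriate classical convex combination of pure-state protocols with Bob-side ancillas. Once this bookkeeping is in place, the single-copy Rains calculation $R(\mathrm{id}\otimes \mc{E}_p(\phi_d))=(1-p)\log d$ and its tensor additivity via the flag structure are routine.
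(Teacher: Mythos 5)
Your achievability argument matches the paper's. Your converse is structured differently from the paper's published proof: the paper first invokes teleportation covariance to reduce EPR distribution to assisted entanglement distillation on the Choi states ($E=D_A^\infty(J_{\mc{E}_{p_1}}\otimes J_{\mc{E}_{p_2}})\leq R_A^\infty$), proves convexity of the Rains relative entropy \emph{of assistance}, and then bounds $R_A$ of the i.i.d.\ Choi states; you instead bound the Rains relative entropy of the channel output for an arbitrary input directly. That route is legitimate in spirit, but note it silently requires a one-shot (weak-converse) form of the Rains bound, since the output $\mc{E}_{p_1}^{\otimes n}\otimes\mc{E}_{p_2}^{\otimes n}(\rho_n)$ is not an i.i.d.\ state to which the usual statement ``$E_D\leq R$'' applies; the paper avoids this by passing through the operational characterization of assisted distillation.

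The genuine gap is the step you yourself flag as the ``main obstacle,'' and it does not merely need bookkeeping --- your key lemma is false in the stated generality. Take $k=|T_1|$ and let $\sigma^{AB^n}$ be maximally entangled between $A$ and a $d^k$-dimensional subspace $C\subset B^n$ that is an erasure-correcting code tolerating $t\approx n(p_2+\delta)$ erasures (possible whenever $k\lesssim n(1-2p_2)$, by Singleton-achieving codes). After $\mathrm{id}^A\otimes\mc{E}_{p_2}^{\otimes n}$, Bob reads the erasure flags and, with probability tending to $1$, applies the recovery map, so the state is close to $\phi_{d^k}$; by monotonicity and continuity, $R\approx k\log d > k(1-p_2)\log d$. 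The same example shows that the commutation identity you want, $\mc{E}_{p_2}^{\otimes n}\circ\mc{O}=\tilde{\mc{O}}\circ\mc{E}_{p_2}^{\otimes n}$ with $\tilde{\mc{O}}$ a Bob-local channel, cannot hold for arbitrary Bob-side CPTP maps $\mc{O}$: with $\mc{O}$ the encoding isometry into $C$, the left-hand side retains nearly $k\log d$ ebits of recoverable entanglement with $A$, while on the right-hand side the erasure has already irreversibly destroyed a $p_2$ fraction of the bare EPR halves, and no subsequent Bob-local map can restore it. In the regime $1-p_1\leq 1-2p_2$ the problematic values of $|T_1|$ are exactly the typical ones, so the convexity decomposition plus your lemma cannot close the converse as written. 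Any repair must exploit what the reduction actually provides --- that the state on the surviving Alice positions is the reduction of a single, erasure-pattern-independent input (equivalently, in the paper's formulation, that the POVM-induced ensemble averages to the maximally mixed state) --- rather than an arbitrary $\sigma$ with bounded Alice dimension; be aware that the paper's own lemma asserts the same commutation step, so simply mirroring its wording will not fill this hole.
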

\begin{proof}
    Clearly, $(1-p_1)(1-p_2)\log d$ is achievable by sending $\phi_d$ through the channels. With probability $(1-p_1)(1-p_2)$ no erasure occurs, whereas with probability $1-(1-p_1)(1-p_2)$ an erasure error occurs on at least one of $A_1$ and $A_2$. They can locally detect erasure errors and classically communicate if an erasure error occurred and discard any such states. This results in a rate of $(1-p_1)(1-p_2)\log d$. Conversely, since erasure channels are teleportation-covariant, $E(\mc{E}_{p_1},\mc{E}_{p_2}) = D_A^\infty\left(J_{\mc{E}_{p_1}}^{S_1A_1} \otimes J_{\mc{E}_{p_2}}^{S_2A_2}\right)\leq R_A^\infty\left(J_{\mc{E}_{p_1}}^{S_1A_1} \otimes J_{\mc{E}_{p_2}}^{S_2A_2}\right)$. Now, for any $n$,
    \begin{align}
        &\left(J_{\mc{E}_{p_1}}^{S_1A_1}\right)^{\otimes n} \otimes \left(J_{\mc{E}_{p_2}}^{S_2A_2}\right)^{\otimes n} \notag\\
        =& \left((1-p_1)\phi_d^{S_1A_1}+p_1\frac{\mbb{I}}{d}^{S_1}\otimes\state{e}^{A_1}\right)^{\otimes n} \otimes \left(J_{\mc{E}_{p_2}}^{S_2A_2}\right)^{\otimes n}\notag\\
        =& \sum_k \binom{n}{k}(1-p_1)^{k}p_1^{n-k}        \left(\phi_d^{S_1A_1}\right)^{\otimes k}\otimes\left(\frac{\mbb{I}}{d}^{S_1}\otimes\state{e}^{A_1}\right)^{\otimes n-k} \otimes \left(J_{\mc{E}_{p_2}}^{S_2A_2}\right)^{\otimes n} \; .
    \end{align}
    Focusing on each term, the Rains relative entropy of assistance is
    \begin{align}
        &R_A\left(\left(\phi_d^{S_1A_1}\right)^{\otimes k}\otimes\left(\frac{\mbb{I}}{d}^{S_1}\otimes\state{e}^{A_1}\right)^{\otimes n-k} \otimes \left(J_{\mc{E}_{p_2}}^{S_2A_2}\right)^{\otimes n}\right) \notag\\
        =& R_A\left(\left(\phi_d^{S_1A_1}\right)^{\otimes k}\otimes\left(J_{\mc{E}_{p_2}}^{S_2A_2}\right)^{\otimes n}\right) \leq k(1-p_2)\log d.
    \end{align}
    The equality follows from the fact that states local to $S$ and $A_1$ cannot help. The proof of the inequality is given in the lemma below. Therefore, by convexity of $R_A$,
    \begin{align}
        &R_A\left(\left(J_{\mc{E}_{p_1}}^{S_1A_1}\right)^{\otimes n} \otimes \left(J_{\mc{E}_{p_2}}^{S_2A_2}\right)^{\otimes n}\right) \\
        &\leq \sum_i \binom{n}{i} (1-p_1)^{i}p_1^{n-i} R_A\left(\left(\phi_d^{S_1A_1}\right)^{\otimes i}\otimes\left(\frac{\mbb{I}}{d}^{S_1}\otimes\state{e}^{A_1}\right)^{\otimes n-i} \otimes \left(J_{\mc{E}_{p_2}}^{S_2A_2}\right)^{\otimes n}\right) \\
        &\leq \sum_i \binom{n}{i} (1-p_1)^{i}p_1^{n-i}i (1-p_2)\log d = (1-p_1)(1-p_2)\log d.
    \end{align}
\end{proof}
\begin{lemma}
    $R_A\left(\left(\phi_d^{S_1A_1}\right)^{\otimes k}\otimes\left(J_{\mc{E}_{p_2}}^{S_2A_2}\right)^{\otimes n}\right) \leq k(1-p_2)\log d$.
\end{lemma}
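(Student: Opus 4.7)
The plan is to decompose the $n$-fold Choi state of the erasure channel as a convex mixture over erasure patterns, apply convexity of $R_A$, and bound each resulting conditional term. Concretely, I would write
\[
\left(J_{\mc{E}_{p_2}}^{S_2A_2}\right)^{\otimes n} = \sum_{T \subseteq [n]} (1-p_2)^{|T|} p_2^{n-|T|} \, \phi_d^{S_2^T A_2^T} \otimes \left(\frac{\mbb{I}^{S_2}}{d} \otimes \state{e}^{A_2}\right)^{\otimes T^c},
\]
where $T$ records the set of non-erased positions. Convexity of $R_A$ then bounds the target by a weighted average of $R_A$ evaluated on the conditional states.

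For each conditional, the tensor factors on erased positions (maximally mixed $\mbb{I}/d$ on the helper and $\state{e}^{A_2}$ on Bob) are local and therefore do not affect $R_A$, so they may be discarded. What remains is the pure tripartite state $(\phi_d^{S_1A_1})^{\otimes k} \otimes (\phi_d^{S_2A_2})^{\otimes |T|}$, with the helper holding $S_1^k S_2^T$, Alice holding $A_1^k$, and Bob holding $A_2^T$. For pure tripartite states, $R_A$ coincides with the entanglement of assistance $E_A$ of the Alice-Bob bipartite marginal, which for a tensor product of EPR pairs evaluates to $\min(k,|T|) \log d$; the value is saturated by entanglement swapping, where the helper performs Bell measurements between paired $S_1$ and $S_2$ qudits.

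The crux of the argument, and the main obstacle, is that naive summation over $T$ gives $\mathbb{E}[\min(k,|T|)] \log d$ with $|T| \sim \mathrm{Bin}(n,1-p_2)$, which can exceed $k(1-p_2) \log d$ when $k < n(1-p_2)$. To obtain the sharp bound, I would strengthen the per-conditional analysis by exhibiting, for each helper POVM outcome, an explicit PPT$'$ dual state $\tau$ whose relative entropy with the post-measurement state $\sigma_i$ is at most $k(1-p_2)\log d$ uniformly in $|T|$. The intuition is that each of the $k$ Alice-helper EPR pairs can relay at most one qudit, and each such qudit must transit the erasure channel, contributing Rains information $R(\mc{E}_{p_2}) = (1-p_2)\log d$ per pair; the product $k \cdot R(\mc{E}_{p_2})$ should emerge from a data-processing argument on the effective channel that the helper's measurement threads from Alice to Bob. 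Making this factorization rigorous, rather than losing a $\min(k,|T|)$ factor in the sum, will be the key technical step.
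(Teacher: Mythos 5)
You have correctly diagnosed the problem with your own first route: decomposing $\bigl(J_{\mc{E}_{p_2}}\bigr)^{\otimes n}$ over erasure patterns and using convexity of $R_A$ only yields $\mathbb{E}\bigl[\min(k,|T|)\bigr]\log d$ with $|T|\sim\mathrm{Bin}(n,1-p_2)$, which exceeds $k(1-p_2)\log d$ whenever $n(1-p_2)>k$, so it cannot prove the lemma (nor would it recover the theorem it feeds into). But the fix you propose---exhibiting an explicit $\mathrm{PPT}'$ witness per helper outcome, or an unspecified ``data-processing argument on the effective channel''---is exactly the step that is left unproved, so as it stands the proposal has a genuine gap: the uniform per-outcome bound $k(1-p_2)\log d$ is asserted as a goal, not derived.

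For comparison, the paper obtains that uniform bound without constructing any dual feasible point. For a POVM $\{M_i\}$ on the helper $S_1^kS_2^n$, the transpose trick turns the conditional post-measurement state into $\sigma_i=[\mathrm{id}^{S_1^k}\otimes\mc{E}_{p_2}^{\otimes n}](\rho_i)$ with $\rho_i=M_i^T/\tr M_i^T$ an arbitrary input state. Then three observations finish the job: (i) since the $S_1$ side consists of only $k$ qudits, any $\rho_i^{S_1^kS_2^n}$ can be generated from a maximally entangled state $\phi^{S_1^kS_2^k}$ (Schmidt rank $d^k$) together with an arbitrary local state by one-way LOCC; (ii) erasure channels commute with arbitrary CPTP pre-processing on their inputs in the sense $\mc{E}_{p_2}^{\otimes n}\circ\mc{O}=\tilde{\mc{O}}\circ\mc{E}_{p_2}^{\otimes n}$, where $\tilde{\mc{O}}$ reads the erasure flags and applies the appropriate marginal of $\mc{O}$ on the surviving systems, so the LOCC simulation can be pushed to the channel outputs; (iii) by LOCC monotonicity of the Rains relative entropy and discarding the state local to $A_2$, one gets $R(\sigma_i)\leq R\bigl([(\mathrm{id}\otimes\mc{E}_{p_2})(\phi_d)]^{\otimes k}\bigr)=k(1-p_2)\log d$ for every outcome $i$, and averaging gives the lemma. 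This is precisely a rigorous version of your ``each of the $k$ Alice--helper pairs relays at most one qudit through one erasure channel use'' intuition, so your instinct about where the factor $k\,(1-p_2)\log d$ comes from is right; what is missing is the simulation-plus-commutation argument (or some substitute) that turns it into a proof.
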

\begin{proof}
    After performing a POVM $\{M_i\}_i$ on $S_1^kS_2^n$ system of the given state, the post-measurement state conditioned on measurement outcome $i$ is
    \begin{align}
        \sigma_i = \left[\text{id}^{S_1^k}\otimes\left(\mc{E}_{p_2}^{S_2\rightarrow A_2}\right)^{\otimes n}\right](\rho_i^{S_1^kS_2^n}), \qquad \rho_i=\frac{M_i^T}{\tr M_i^T} \; .
    \end{align}
    Any state $\rho^{S_1^kS_2^n}$ can be simulated by a maximally entangled state $\phi^{S_1^kS_2^k}$ and one-way LOCC, i.e., $\rho^{S_1^kS_2^n}=\sum_j \mc{M}_j^{S_1^k}\otimes\mc{O}_j^{S_2^n}(\phi^{S_1^kS_2^k}\otimes\tau^{S_2^{n-k}})$, where $\tau$ is an arbitrary state, $\mc{M}_j$ are completely positive trace-non-increasing maps such that $\sum_j\mc{M}_j$ is trace-preserving, and $\mc{O}_j$ are completely positive and trace-preserving (CPTP). For any CPTP map $\mc{O}^{S_2^n}$, we have $\mc{E}^{\otimes n}\circ\mc{O}=\tilde{\mc{O}}\circ\mc{E}^{\otimes n}$, where $\tilde{\mc{O}}$ is the CPTP map that measures the location of erasure errors and implements the partial trace of $\mc{O}$ on the intact systems. In mathematical terms, $\tilde{\mc{O}}(\rho)=\sum_{L\subset S_2^n} \mc{O}_{L^c}(P_L\rho P_L)$, where the sum is over subsystems $L$ of $S_2^n$, operators $P_L$ are projections associated with measuring error flag on and only on subsystems $L$, and $\mc{O}_{L^c}=\tr_{L}\mc{O}$. Therefore, 
    \begin{align}
        \sigma_i = \sum_j \left(\mc{M}_{i,j}^{S_1^k}\otimes\tilde{\mc{O}}_{i,j}^{A_2^n}\right) \circ \left[\text{id}^{S_1^k}\otimes\left(\mc{E}_{p_2}^{S_2\rightarrow A_2}\right)^{\otimes n}\right]\left(\phi^{S_1^kS_2^k}\otimes\tau^{S_2^{n-k}}\right) \; .
    \end{align}
    By monotonicity of Rains relative entropy, 
    \begin{align}
        R(\sigma_i) &\leq R\left(\left[\text{id}^{S_1^k}\otimes\left(\mc{E}_{p_2}^{S_2\rightarrow A_2}\right)^{\otimes n}\right]\left(\phi^{S_1^kS_2^k}\otimes\tau^{S_2^{n-k}}\right)\right) \\
        &= R\left(\left[\text{id}^{S_1}\otimes\mc{E}_{p_2}^{S_2\rightarrow A_2}(\phi^{S_1S_2})\right]^{\otimes k} \otimes \left(\mc{E}_{p_2}^{S_2\rightarrow A_2}\right)^{\otimes n-k}(\tau^{S_2^{n-k}})\right).
    \end{align}
    Since $\left(\mc{E}_{p_2}^{S_2\rightarrow A_2}\right)^{\otimes n-k}(\tau^{S_2^{n-k}})$ is a state local to $A_2$, it does not increase the Rains relative entropy. Thus,
    \begin{align}
        R(\sigma_i) &\leq R\left(\left[\text{id}^{S_1}\otimes\mc{E}_{p_2}^{S_2\rightarrow A_2}(\phi^{S_1S_2})\right]^{\otimes k}\right) \\ 
        &= k(1-p_2)\log d \;,
    \end{align}
    which concludes the proof.
\end{proof}

\section{An Explicit Protocol Construction for EPR Distribution}\label{sect:explicit-protocol}
While we have established a connection between $E(\mc{N}_1,\mc{N}_2)$ and $D_A^\infty(J_{\mc{N}_1}\otimes J_{\mc{N}_2})$ and used the connection to give an exact expression of the EPR distribution capacity of two erasure channels, it is in general difficult to mathematically characterize the asymptotic entanglement of assistance $D_A^\infty$. In this section, we introduce a simple lower bound of the EPR distribution capacity based on a direct construction.
\begin{theorem}\label{thm:construction}
    $E(\mc{N}_1,\mc{N}_2) \geq \max\{\min\{Q(\mc{N}_1),E_{cpp}(\mc{N}_2)\},\min\{Q(\mc{N}_2),E_{cpp}(\mc{N}_1)\}\}$.
\end{theorem}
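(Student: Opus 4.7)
The plan is to establish the bound by constructing an explicit EPR distribution protocol. By the symmetry of the right-hand side under the exchange $\mc{N}_1 \leftrightarrow \mc{N}_2$, it suffices to show
\[
E(\mc{N}_1, \mc{N}_2) \geq \min\{Q(\mc{N}_1), E_{cpp}(\mc{N}_2)\},
\]
and the other term inside the outer maximum follows by swapping the two indices.

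The strategy is to concatenate a CPP-assisted entanglement generation code for $\mc{N}_2$ with a quantum communication code for $\mc{N}_1$. Heuristically, the source internally plays the role of ``Alice'' in the CPP-assisted EG code, preparing the state that would let it share entanglement with Bob through $\mc{N}_2$, and in parallel uses the QC code on $\mc{N}_1$ to ship its $\tilde{A}$-register to the actual receiver Alice, after which Alice and Bob execute the CPP post-processing to distill the EPR pairs. Fixing $\epsilon>0$ and setting $R = \min\{Q(\mc{N}_1), E_{cpp}(\mc{N}_2)\} - \epsilon$, for sufficiently large $n$ I would pick a CPP-assisted EG code for $\mc{N}_2^{\otimes n}$ of rate $R$ with vanishing error, consisting of an encoder $\mc{E}_{cpp}^{A_0 \to \tilde{A} S_2^n}$ preparing a pure state $\psi^{\tilde{A} S_2^n}$ and an LOCC post-processing $\mc{L}_{cpp}^{\tilde{A} B_2^n \to \hat{A}\hat{B}}$ outputting an approximate maximally entangled state of Schmidt rank $2^{nR}$, together with a QC code $(\mc{E}_{qc}^{\tilde{A}\to S_1^n},\, \mc{D}_{qc}^{A_1^n\to\tilde{A}})$ for $\mc{N}_1^{\otimes n}$ of rate just above $R$ that faithfully transmits the $\tilde{A}$-register. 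The source's overall encoding is $(\mc{E}_{qc}\otimes\mathrm{id}^{S_2^n})\circ \mc{E}_{cpp}$; after the channels, Alice applies $\mc{D}_{qc}$ to recover $\tilde{A}$, and then Alice and Bob run $\mc{L}_{cpp}$ on $(\tilde{A}, B_2^n)$ to distill the EPR pairs.

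The main obstacle is dimensional: a faithful QC code over $n$ uses of $\mc{N}_1$ can only transmit states on a system of dimension at most roughly $2^{nQ(\mc{N}_1)}$, whereas a generic CPP-assisted EG code with rate close to $E_{cpp}(\mc{N}_2)$ may \emph{a priori} have $\dim\tilde{A}$ as large as $\dim A_2^n$. The heart of the proof is therefore to exhibit, for each $R < E_{cpp}(\mc{N}_2)$, a CPP-assisted EG code of rate $R$ whose ancilla satisfies $\log \dim \tilde{A} \leq nR + o(n)$, so that once $R \leq Q(\mc{N}_1)$ the QC leg is adequate. I would approach this either via a Schumacher-style compression of $\tilde{A}$ onto its typical subspace (with only the compressed register being transmitted through the QC code and subsequently decompressed by Alice), or by specializing to an achievability scheme, for instance the random-coding proof attaining the reverse coherent information lower bound, in which the ancilla dimension is controlled by construction. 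Once this dimensional bound is secured, the remainder of the argument is routine: a triangle inequality combines the QC and CPP errors, continuity of the trace norm under $\mc{L}_{cpp}$ propagates the QC decoding error to the final output fidelity, and letting $\epsilon \to 0$ gives the claimed achievable rate.
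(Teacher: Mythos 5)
You have reproduced exactly the paper's construction (concatenating a quantum communication code for $\mc{N}_1$ with a CPP-assisted entanglement generation code for $\mc{N}_2$), and you have put your finger on precisely the step the paper passes over silently: nothing in the definition of a CPP-assisted code bounds the retained register $\tilde{A}$, so it need not fit into a rate-$Q(\mc{N}_1)$ code over $n$ uses of $\mc{N}_1$. The gap is real, but neither of your proposed repairs closes it. Schumacher compression only shrinks $\tilde{A}$ to dimension $2^{S(\tilde{A})+o(n)}$, and for the natural achievability schemes (hashing, reverse coherent information, the post-selection protocol for erasure channels) the retained system is $A^n$ itself with $S(\tilde{A})$ of order $n\log\dim S_2$, which can exceed both $nR$ and $nQ(\mc{N}_1)$; and specializing to the reverse-coherent-information random-coding scheme leaves the ancilla at $n\log\dim S_2$ qubits and would in any case only yield $\min\{Q(\mc{N}_1),I_R(\mc{N}_2)\}$ rather than the claimed $E_{cpp}(\mc{N}_2)$.

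Worse, the ``bounded-ancilla'' lemma you identify as the heart of the proof --- a rate-$R$ CPP code with $\log\dim\tilde{A}\leq nR+o(n)$ --- is false, and no patch along these lines can rescue the inequality as stated, because the inequality conflicts with the paper's own erasure result. Take qubit erasure channels with $p_1=0.1$ and $p_2=0.5$: then $Q(\mc{E}_{p_1})=1-2p_1=0.8$ and $E_{cpp}(\mc{E}_{p_2})\geq 1-p_2=0.5$ (keep halves of EPR pairs and post-select on the survivors), so the theorem would give $E(\mc{E}_{p_1},\mc{E}_{p_2})\geq 0.5$, whereas the exact capacity established in Section \ref{sect:erasure-channels} is $(1-p_1)(1-p_2)=0.45$. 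The mechanism is exactly your dimensional worry: for $\mc{E}_{1/2}$, conditioning on the erasure pattern $X$ and using $E_D\leq E_{sq}\leq\tfrac12 I(\tilde{A};\text{survivors})$ together with the symmetry $X\leftrightarrow X^c$ of the pattern distribution shows that any code with pure preparation has rate at most $\tfrac{1}{2n}\log\dim\tilde{A}$, so rates near $E_{cpp}(\mc{E}_{1/2})=1/2$ force $\tilde{A}$ to occupy nearly $n$ qubits, which cannot be transmitted through $\mc{N}_1$ unless $Q(\mc{N}_1)$ is essentially $\log\dim S_2$. What the concatenation argument (yours and the paper's) actually proves is the weaker statement in which $E_{cpp}(\mc{N}_2)$ is replaced by the best CPP rate achievable with a retained system of at most $nQ(\mc{N}_1)$ qubits --- in particular it is sound when $Q(\mc{N}_1)\geq\log\dim S_2$ --- so the theorem as stated requires either that weakening or a genuinely different argument, not a refinement of the compression idea.
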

\begin{proof}
    Let $Q\coloneqq\min\{Q(\mc{N}_1),E_{cpp}(\mc{N}_2)\}$. By the definitions of unassisted quantum capacity and CPP assisted entanglement generation capacity, for any $\epsilon_1,\epsilon_2\in(0,1)$ and $\delta>0$, there exists an $(n,Q-\delta,\epsilon_1)$ quantum communication code $(\mc{E}_1,\mc{D}_1)$ for $\mc{N}_1$ and an $(n,Q-\delta,\epsilon_2)$ CPP assisted entanglement generation code $(\mc{E}_2,\mc{L}_2)$ for $\mc{N}_2$, as long as $n$ is sufficiently large. Our GHZ distribution protocol then consists of an encoding operation $\mc{E} \coloneqq \mc{E}_1^{S\rightarrow S_1^n} \circ \mc{E}_2^{S_0\rightarrow SS_2^n}$ and classical post-processing $\mc{L} \coloneqq \mc{L}_2^{SA_2^{\otimes n}\rightarrow\hat{A}\hat{B}} \circ \mc{D}_1^{A_1^{\otimes n}\rightarrow S}$. The error of the protocol is 
    \begin{align}
        \left\| \phi_d - \mc{L}\circ\left(\mc{N}_1^{\otimes n}\otimes\mc{N}_2^{\otimes n}\right)\circ\mc{E} \right\|_1 &\leq \left\| \phi_d - \mc{L}_2\circ\mc{N}_2^{\otimes n}\circ\mc{E}_2 \right\|_1 + \left\| \mc{L}_2\circ\mc{N}_2^{\otimes n}\circ\mc{E}_2 - \mc{L}\circ\left(\mc{N}_1^{\otimes n}\otimes\mc{N}_2^{\otimes n}\right)\circ\mc{E} \right\|_1 \\
        &\leq \epsilon_2 + \left\| \mc{E}_2 - \mc{D}\circ\mc{N}_1^{\otimes n}\circ\mc{E}_1\circ\mc{E}_2 \right\|_1 \\
        &\leq \epsilon_1 + \epsilon_2.
    \end{align}
    where the first inequality follows from the triangle inequality, the second inequality follows from the fact that $(\mc{E}_2,\mc{L}_2)$ is a CPP assisted entanglement generation code and the data processing inequality, and the final inequality follows from the fact that $(\mc{E}_1,\mc{D}_1)$ is a quantum communication code over $\mc{N}_1$. Therefore, we conclude that for any $\epsilon_1,\epsilon_2\in(0,1)$ and $\delta>0$, there exists a $(n,Q-\delta,\epsilon_1+\epsilon_2)$ EPR distribution protocol for sufficiently large $n$. This means that $Q$ is an achievable EPR distribution rate. Interchanging $\mc{N}_1$ and $\mc{N}_2$ proves we can achieve $\min\{Q(\mc{N}_2),E_{cpp}(\mc{N}_1)\}$ as well, which establishes the theorem.
\end{proof}
\begin{remark}
    Since $E_{cpp}(\mc{N})\geq Q(\mc{N})$ for an arbitrary channel $\mc{N}$, we deduce $E(\mc{N}_1,\mc{N}_2)\geq\min\{Q(\mc{N}_1),Q(\mc{N}_2)\}$.
\end{remark}
\begin{remark}
    Evidently, $E(\mc{N}_1,\mc{N}_2)\leq E(\mc{N}_1,\mathrm{id})\leq E_{cpp}(\mc{N}_1)$. By the same argument $E(\mc{N}_1,\mc{N}_2)\leq E_{cpp}(\mc{N}_2)$ as well. Thus, $E(\mc{N}_1,\mc{N}_2)\leq\min\{E_{cpp}(\mc{N}_1),E_{cpp}(\mc{N}_2)\}\leq\min\{R(\mc{N}_1),R(\mc{N}_2)\}$. A consequence of this observation is that the lower bound in Thm. \ref{thm:construction} is tight when the quantum capacity of a channel is greater than the entanglement generation capacity of the other channel. To see this, suppose without loss of generality that $Q(\mc{N}_1)\geq E_{cpp}(\mc{N}_2)$, then $E_{cpp}(\mc{N}_1)\geq Q(\mc{N}_1)\geq E_{cpp}(\mc{N}_2)\geq Q(\mc{N}_2)$. The lower bound thus implies $E(\mc{N}_1,\mc{N}_2)\geq \max\{E_{cpp}(\mc{N}_2),Q(\mc{N}_2)\}=E_{cpp}(\mc{N}_2)$.
\end{remark}

\subsection{Generalized Amplitude Damping Channels}\label{sect:GADC}
Generalized amplitude damping channels $\mc{A}_{\gamma,T}$ model the $T_1$ relaxation process due to the coupling of a spin-1/2 system to a thermal reservoir at a higher temperature than that of the spin system \cite[p. 382]{nielsen-2010}. The Kraus operators 
\begin{align}
    K_1 &= \sqrt{1-T}\left(\state{0}+\sqrt{1-\gamma}\state{1}\right), \\
    K_2 &= \sqrt{\gamma(1-T)}\op{0}{1}, \\
    K_3 &= \sqrt{T}\left(\sqrt{1-\gamma}\state{0}+\state{1}\right), \\
    K_4 &= \sqrt{\gamma T}\op{1}{0}.
\end{align}
If the thermal parameter $T=0$, then $K_3$ and $K_4$ vanish, and the channel reduces to the ordinary amplitude damping channel $\mc{A}_{\gamma}\coloneqq\mc{A}_{\gamma,0}$, which models the case where the thermal reservoir is at zero temperature. While there are no exact expressions for the quantum capacity or the CPP-assisted entanglement generation capacity for GADCs presently, the coherent information and the Rains information of GADCs have been computed in Refs. \cite{garcia-patron-2009} and Ref. \cite{khatri-2020} respectively. Note that a GADC $\mc{A}_{\gamma,T}$ is anti-degradable when $\gamma\geq1/2$ and thus its quantum capacity necessarily vanishes, while the Rains information does not necessarily vanish. This means there is potentially a large gap between the lower bound given above and the upper bound. This motivates us to find better lower bounds. 

In the following, we detail an improved protocol based on multi-rail encoding \cite{chuang1996quantum,duan2010multi,abdelhadi2024adaptive}. Let us illustrate the idea with amplitude damping channels and dual-rail encoding. Using dual-rail encoding, the receivers are able to detect damping errors. More precisely, we send the state $\ket{\psi_2} = \frac{1}{\sqrt{2}}(\ket{01}\ket{01}+\ket{10}\ket{10})$ through $\mc{A}_{\gamma_1}^{\otimes 2}\otimes\mc{A}_{\gamma_2}^{\otimes 2}$: 
\begin{align}
    \rho = \mc{A}_{\gamma_1}^{\otimes 2}\otimes\mc{A}_{\gamma_2}^{\otimes 2}(\psi_2) &= (1-\gamma_1)(1-\gamma_2)\psi_2 + \frac{\gamma_1(1-\gamma_2)}{2}(\state{0001}+\state{0010}) \\& \qquad {}+\frac{(1-\gamma_1)\gamma_2}{2}(\state{0100}+\state{1000})+\gamma_1\gamma_2\state{0000}
\end{align}
$A_1$ and $A_2$ then both post-select on the one-particle sector, i.e., they both perform the projection $P=\state{01}+\state{10}$, and the post-selected state is $\rho'=(P\otimes P)\rho(P\otimes P)/\tr(P\otimes P)\rho(P\otimes P)$. This procedure produces a rank-2 maximally entangled state with probability $(1-\gamma_1)(1-\gamma_2)$, meaning we can achieve the rate of $(1-\gamma_1)(1-\gamma_2)/2$ per channel use. More generally, we can use multi-rail encoding:
\begin{align}
    \ket{\psi_k} = \frac{1}{\sqrt{k}}(\ket{00\cdots01}\ket{00\cdots01}+\ket{00\cdots10}\ket{00\cdots10}+\cdots+\ket{10\cdots00}\ket{10\cdots00}).
\end{align}
The receivers again post-select on the one-particle sector, producing a rank-$k$ maximally entangled state with probability $(1-\gamma_1)(1-\gamma_2)$, thus achieving the rate of $(1-\gamma_1)(1-\gamma_2)\log k/k$. It is easy to check that $\sup_{k\in\mbb{Z}_+} \log k/k$ is achieved when $k=3$. Therefore, the largest rate that we can achieve with multi-rail encoding is $(1-\gamma_1)(1-\gamma_2)\log 3/3$.

This strategy can be adapted to generalized amplitude damping channels. We again use multi-rail encoding and post-select on the one-particle subspace. After post-selection, the two receivers perform the hashing protocol \cite{devetak-2005-distillation}. The rates achievable with dual-rail encoding are plotted in Fig. \ref{fig:GADC}. 

\begin{figure}
    \centering
    \includegraphics[width=\textwidth]{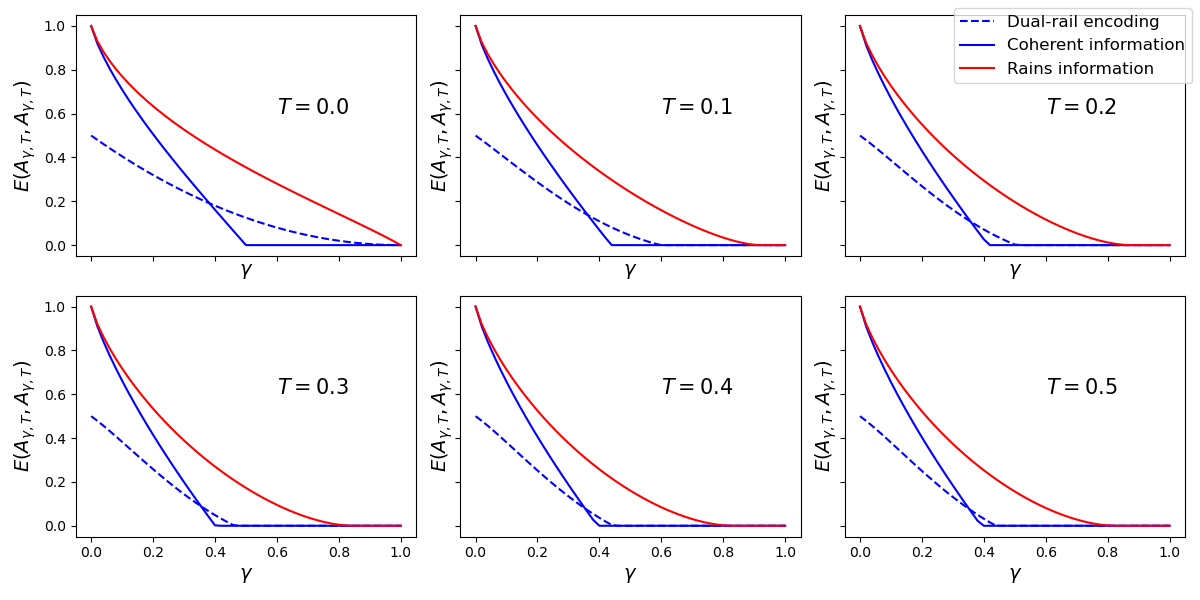}
    \caption{Lower and upper bounds of the bipartite entanglement distribution capacity, assuming the two channels are generalized amplitude damping channels with the same parameters.}
    \label{fig:GADC}
\end{figure}

\section{GHZ distribution}\label{sect:multipartite}
In this section, we establish bounds for the GHZ distribution capacity $E(\mc{N}_1,\cdots,\mc{N}_N)$ with $N>2$. 
\begin{theorem}
    $E(\mc{N}_1,\cdots,\mc{N}_N)\geq\min_i Q(\mc{N}_i)$.
    \label{thm:general-lower-bound}
\end{theorem}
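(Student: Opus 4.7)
The plan is to have the source prepare the $k$ copies of $|\mathrm{GHZ}_N\rangle$ locally and route the $i$-th qubit of each copy to receiver $A_i$ by simulating a noiseless channel through the quantum communication code for $\mc{N}_i$. Because a quantum communication code preserves an arbitrary pure state on a reference–system pair, transmitting one share of $|\mathrm{GHZ}_N\rangle$ through such a simulated noiseless channel keeps the full multipartite correlations nearly intact, and no classical communication between receivers is required (so the LOCC post-processing is trivially local).

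Let $Q:=\min_i Q(\mc{N}_i)$ and fix $\delta,\epsilon>0$. By the definition of quantum capacity, for each $i$ and all sufficiently large $n$ there exists a $(n,Q-\delta,\epsilon_n^{(i)})$ quantum communication code $(\mc{E}_i^{S\to S_i^n},\mc{D}_i^{A_i^n\to \widehat{S}_i})$ for $\mc{N}_i$ with a common message dimension $2^k$, where $k=\lfloor n(Q-\delta)\rfloor$, and with $\epsilon_n^{(i)}\le \epsilon/N$. The protocol is: the source prepares $\rho_0:=\state{\mathrm{GHZ}_N}^{\otimes k}$ on $\widetilde{S}_1^k\cdots\widetilde{S}_N^k$ (each $\widetilde S_i\cong S$), applies $\mc{E}_i$ to $\widetilde{S}_i^k$ for each $i$, routes the resulting $S_i^n$ through $\mc{N}_i^{\otimes n}$, and each receiver $A_i$ locally applies $\mc{D}_i$ to the channel output. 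To bound the error, let $\rho_i$ denote the state after the first $i$ encode–channel–decode composites have been applied. By the triangle inequality and contractivity of trace distance under CPTP maps,
\[
\|\rho_i-\rho_0\|_1 \;\le\; \|\rho_{i-1}-\rho_0\|_1 + \bigl\|\bigl(\mathrm{id}\otimes(\mc{D}_i\circ\mc{N}_i^{\otimes n}\circ\mc{E}_i)\otimes\mathrm{id}\bigr)\rho_0 - \rho_0\bigr\|_1 \;\le\; \|\rho_{i-1}-\rho_0\|_1 + \epsilon_n^{(i)},
\]
where the last bound is the code guarantee applied to the pure state $\rho_0$, viewing the $i$-th register as the ``$S$'' system and the remaining registers as a purifying reference ``$R$''. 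Iterating yields $\|\rho_N-\rho_0\|_1\le \sum_i \epsilon_n^{(i)}\le \epsilon$, so the rate $Q-\delta$ is achievable; letting $\delta\to 0$ gives $E(\mc{N}_1,\ldots,\mc{N}_N)\ge \min_i Q(\mc{N}_i)$.

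There is no substantial obstacle: the entire argument is a clean reduction to the existence of good single-channel quantum codes, combined with register-at-a-time triangle-inequality bookkeeping. The only conceptual step is recognizing that the other GHZ shares automatically play the role of the reference $R$ in each code's pure-state guarantee, which avoids any need for a separate mixed-state extension of the code definition.
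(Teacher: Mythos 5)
Your proposal is correct and follows essentially the same route as the paper: encode each share of $\state{\mathrm{GHZ}_N}^{\otimes k}$ with a good quantum communication code for the corresponding channel at rate $\min_i Q(\mc{N}_i)-\delta$, decode locally (no classical communication needed), and accumulate the error one register at a time via the triangle inequality, contractivity of the trace distance, and the pure-state code guarantee with the remaining shares acting as the reference. Your bookkeeping is just a slightly more explicit version of the paper's iterative peeling argument.
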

\begin{proof}
    Let $Q \coloneqq \min_i Q(\mc{N}_i)$. By definition of unassisted quantum capacity, for each channel $\mc{N}_i$ and for all $\epsilon_i\in(0,1), \delta>0$, there exists an $(n,Q-\delta,\epsilon_i)$ quantum communication code $(\mc{E}_i,\mc{D}_i)$ for sufficiently large $n$. Our GHZ distribution protocol then consists of an encoding operation $\mc{E} \coloneqq \bigotimes_i \mc{E}_i^{S_i\rightarrow S_i^{\otimes n}}(\state{\mathrm{GHZ}}^{S_1\cdots S_N})$ and post-processing $\mc{D} \coloneqq \bigotimes_i \mc{D}_i^{A_i^{\otimes n}\rightarrow \hat{A}_i}$. The error of the protocol is
    \begin{align}
        &\left\| \state{\mathrm{GHZ}} - \left(\bigotimes\nolimits_i \mc{D}_i\circ\mc{N}_i\circ\mc{E}_i\right)(\state{\mathrm{GHZ}}) \right\|_1 \\
        & \leq \left\| \state{\mathrm{GHZ}} - \mc{D}_1\circ\mc{N}_1\circ\mc{E}_1(\state{\mathrm{GHZ}} ) \right\|_1 \\
        &\qquad+ \left\|\mc{D}_1\circ\mc{N}_1\circ\mc{E}_1(\state{\mathrm{GHZ}} ) - \left(\bigotimes\nolimits_i \mc{D}_i\circ\mc{N}_i\circ\mc{E}_i\right)(\state{\mathrm{GHZ}}) \right\|_1 \\
        & \leq \epsilon_1 + \left\| \state{\mathrm{GHZ}} - \left(\bigotimes\nolimits_{i\neq1} \mc{D}_i\circ\mc{N}_i\circ\mc{E}_i\right)(\state{\mathrm{GHZ}}) \right\|_1
    \end{align}
    where in the first inequality follows from the triangle inequality, and the second inequality follows from the fact that $(\mc{E}_1,\mc{D}_1)$ is a quantum communication code for $\mc{N}_1$ and the data processing inequality for trace distance. Following this procedure iteratively, we find that 
    \begin{align}
        \left\| \state{\mathrm{GHZ}} - \left(\bigotimes\nolimits_i \mc{D}_i\circ\mc{N}_i\circ\mc{E}_i\right)(\state{\mathrm{GHZ}}) \right\|_1 \leq \sum\nolimits_{i} \epsilon_i
    \end{align}
    Therefore, we conclude that for any $\epsilon_i\in(0,1)$ and $\delta>0$, there exists a $\left(n,Q-\delta,\sum_i\epsilon_i\right)$ GHZ distribution protocol for sufficiently large $n$. In other words, $Q$ is an achievable rate.
\end{proof}
\noindent Conversely, we establish an upper bound in terms of CPP-assisted entanglement generation capacity.
\begin{theorem}
    $E(\mc{N}_1,\cdots,\mc{N}_N) \leq \min_i E_{\mathrm{cpp}}(\mc{N}_i)$.
    \label{thm:general-upper-bound}
\end{theorem}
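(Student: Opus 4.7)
The plan is to show $E(\mc{N}_1,\ldots,\mc{N}_N) \leq E_{cpp}(\mc{N}_i)$ for each $i$ and then take the minimum. This generalizes the bipartite observation $E(\mc{N}_1,\mc{N}_2) \leq E_{cpp}(\mc{N}_1)$ from the remark after Theorem \ref{thm:construction}. The core idea is to convert any GHZ distribution code into a CPP-assisted entanglement generation code for $\mc{N}_i$ by regarding all parties other than $A_i$, together with the source, as a single ``Alice,'' while $A_i$ plays the role of ``Bob.''

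Given a fixed $i$ and an $(n,Q_n,\epsilon_n)$-GHZ distribution code $(\mc{E},\mc{L})$, I would construct Alice's encoding $\tilde{\mc{E}}^{A_0 \to \tilde{A}A^n}$ as follows: run $\mc{E}$ on the trivial input, then apply $\bigotimes_{j\neq i}\mc{N}_j^{\otimes n}$ locally, retaining $\tilde{A} = A_1^n\cdots A_{i-1}^n A_{i+1}^n\cdots A_N^n$ while sending $A^n = S_i^n$ through $\mc{N}_i^{\otimes n}$. This is a valid CPP encoding because Alice controls all $A_j$ for $j\neq i$, so the $\mc{N}_j^{\otimes n}$ act entirely on her side.

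Next, I would build the LOCC post-processing $\tilde{\mc{L}}^{\tilde{A}B^n\to\hat{A}\hat{B}}$ in two stages. Stage one runs $\mc{L}$, which as an $N$-party LOCC collapses to an LOCC between Alice and Bob when $N-1$ of the parties are held by Alice. By hypothesis, the output is $\epsilon_n$-close in trace distance to $(\state{\mathrm{GHZ}_N})^{\otimes k}$ on $\hat{A}_1\cdots\hat{A}_N$, where $k = nQ_n$. Stage two applies a local isometry on Alice that converts each GHZ copy into a qubit EPR pair with $\hat{A}_i$: since $\ket{\mathrm{GHZ}_N} = \tfrac{1}{\sqrt{2}}\bigl(\ket{0}^{\hat{A}_i}\ket{0}^{\otimes (N-1)} + \ket{1}^{\hat{A}_i}\ket{1}^{\otimes (N-1)}\bigr)$ has Schmidt rank $2$ across the cut $\hat{A}_i \mid \mathrm{rest}$, the partial isometry on Alice defined by $\ket{0}^{\otimes (N-1)}\mapsto\ket{0}$, $\ket{1}^{\otimes (N-1)}\mapsto\ket{1}$ maps $\ket{\mathrm{GHZ}_N}$ exactly to a Bell pair. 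Applying this $k$ times and regrouping Alice's qubits into $\hat{A}$ gives a Schmidt-rank-$2^k$ maximally entangled state between $\hat{A}$ and $\hat{B}=\hat{A}_i$; the error stays at most $\epsilon_n$ by data processing of trace distance.

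This construction yields an $(n,Q_n,\epsilon_n)$ CPP-assisted entanglement generation code for $\mc{N}_i$, so $E(\mc{N}_1,\ldots,\mc{N}_N) \leq E_{cpp}(\mc{N}_i)$; minimizing over $i$ proves the theorem. I do not expect a significant obstacle: the main thing to check is that every operation assigned to Alice is genuinely local---both the $\mc{N}_j^{\otimes n}$ during encoding and the isometry during post-processing---and this is immediate from how we grouped the parties.
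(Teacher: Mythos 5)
Your proposal is correct and follows essentially the same route as the paper: the paper also groups the source and all receivers $A_j$ with $j\neq i$ into a single ``Alice'' (so the channels $\mc{N}_j^{\otimes n}$, $j\neq i$, become local operations), and uses the fact that GHZ states yield maximal entanglement across the cut $A_i\,|\,\mathrm{rest}$, expressed there as the chain $E(\mc{N}_1,\cdots,\mc{N}_N)\leq E(\mc{N}_i,\bigotimes_{j\neq i}\mc{N}_j)\leq E(\mc{N}_i,\mathrm{id})\leq E_{cpp}(\mc{N}_i)$. Your explicit code-conversion argument simply spells out the steps the paper compresses into that inequality chain, including the Schmidt-rank-2 isometry converting each GHZ copy into a Bell pair.
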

\begin{proof}
    For an arbitrary receiver $A_i$, we ignore the locality constraint on the other receivers $\{A_j:j\neq i\}$ and regard them as a single receiver $A_{i^c}$ that receives channel output from all of $\{\mc{N}_j:j\neq i\}$. Clearly, the GHZ distribution capacity between $A_1,\cdots,A_N$ cannot be higher than the EPR distribution capacity between $A_i$ and $A_{i^c}$. This implies the following chain of inequalities:
    \begin{align}
        E(\mc{N}_1,\cdots,\mc{N}_N) \leq E\left(\mc{N}_i,\bigotimes\nolimits_{j\neq i}\mc{N}_j \right) \leq E(\mc{N}_i,\text{id}) \leq E_{cpp}(\mc{N}_i).
    \end{align}
    Since this argument works for arbitrary $i$, the stated theorem holds.
\end{proof}

\noindent A few remarks are in order here. First, perhaps surprisingly, the achievable rate we derived in Theorem \ref{thm:general-lower-bound} effectively neglects the less noisy channels, treating them as if they were noiseless. Second, this rate can be achieved without the assistance of classical communication. In fact, without classical communication, this is precisely the largest rate that is achievable. Finally, the upper bound is also tight when LOCC operations between the receivers and the central node are allowed. To see this, the source node can establish $E_{cpp}(\mc{N}_i)$ EPR pairs between $S$ and $A_i$ per use of each $\mc{N}_i$. Then the source node prepares an $N$-partite GHZ state and teleports each subsystem to $A_i$ by using the established EPR pairs. This establishes GHZ states among $A_i$ at a rate of $\min_i E_{cpp}(\mc{N}_i)$. Therefore, one can view the quantitative difference between the minimum quantum capacity $\min_i Q(\mc{N}_i)$, the GHZ distribution capacity $E(\mc{N}_1,\cdots,\mc{N}_N)$, and the minimum CPP-assisted entanglement generation capacity $\min_i E_{cpp}(\mc{N}_i)$ as a result of different levels of classical communication between parties in a quantum network.

A corollary of the previous theorem is an exact expression of the GHZ distribution capacity for when the ``worst'' channel is a dephasing channel.
\begin{proposition}\label{prop:dephasing}
    Consider $N$ channels $\mc{Z}_{p},\mc{N}_2,\cdots,\mc{N}_N$ where the $\mc{N}_i$ are channels satisfying $Q(\mc{N}_i)\geq Q(\mc{Z}_p)=1-h_2(p)$ for all $i>1$.  Then 
    $E(\mc{Z}_{p},\mc{N}_2,\cdots,\mc{N}_{N})=1-h_2(p)$.
\end{proposition}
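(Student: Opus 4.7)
The plan is to sandwich $E(\mc{Z}_p,\mc{N}_2,\ldots,\mc{N}_N)$ between matching lower and upper bounds using Theorem \ref{thm:general-lower-bound} and Theorem \ref{thm:general-upper-bound}, together with the known coincidence of quantum capacity and Rains information for the dephasing channel.

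First I would invoke Theorem \ref{thm:general-lower-bound}, which gives
\begin{align}
E(\mc{Z}_p,\mc{N}_2,\ldots,\mc{N}_N) \geq \min\{Q(\mc{Z}_p), Q(\mc{N}_2), \ldots, Q(\mc{N}_N)\}.
\end{align}
By hypothesis $Q(\mc{N}_i) \geq Q(\mc{Z}_p) = 1 - h_2(p)$ for every $i > 1$, so the minimum on the right equals $1 - h_2(p)$. This supplies the lower bound.

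Next, for the upper bound I would apply Theorem \ref{thm:general-upper-bound}, which asserts
\begin{align}
E(\mc{Z}_p,\mc{N}_2,\ldots,\mc{N}_N) \leq \min_i E_{\mathrm{cpp}}(\mc{N}_i) \leq E_{\mathrm{cpp}}(\mc{Z}_p),
\end{align}
where the second inequality just picks the term $i=1$. It then remains to argue $E_{\mathrm{cpp}}(\mc{Z}_p) \leq 1 - h_2(p)$. This follows from the chain $E_{\mathrm{cpp}}(\mc{Z}_p) \leq R(\mc{Z}_p)$, which is the Rains information upper bound noted in Section \ref{sect:preliminaries}, together with the known fact that for the qubit dephasing channel $R(\mc{Z}_p) = 1 - h_2(p)$, matching its coherent information and quantum capacity. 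Combining the two bounds yields the claimed equality.

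The only nontrivial input beyond the theorems already established in the excerpt is the computation $R(\mc{Z}_p) = 1 - h_2(p)$, but this is a standard result (the dephasing channel is degradable with additive coherent information, and the Rains information coincides with the quantum capacity), so it can be cited rather than re-derived. Consequently, I do not expect any genuine obstacle: the proof is essentially a two-line sandwich argument, and its main purpose is to illustrate how the gap in the general chain $\min_i Q(\mc{N}_i) \leq E(\mc{N}_1,\ldots,\mc{N}_N) \leq \min_i E_{\mathrm{cpp}}(\mc{N}_i)$ collapses as soon as the noisiest channel saturates $Q = E_{\mathrm{cpp}}$.
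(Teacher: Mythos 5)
Your proposal is correct and matches the paper's argument: the paper likewise sandwiches the capacity between Theorem \ref{thm:general-lower-bound} and Theorem \ref{thm:general-upper-bound}, citing the fact that $Q(\mc{Z}_p)=E_{\mathrm{cpp}}(\mc{Z}_p)=1-h_2(p)$ (via the Rains information computation of Tomamichel et al.), which is exactly the chain you spell out.
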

\begin{proof}
    The statement follows directly from the fact that $Q(\mc{Z}_p)=E_{cpp}(\mc{Z}_p)=1-h_2(p)$ \cite{tomamichel-2017}.
\end{proof}

There is an alternative strategy to distribute GHZ states to $N$ parties that contrasts with Thm. \ref{thm:general-lower-bound}. The central node can distribute bipartite EPR pairs to every pair of receivers. The receivers then use these EPR pairs to build GHZ states. This is detailed in the following theorem.
\begin{theorem}
    $E(\mc{N}_1,\cdots,\mc{N}_N) \geq \displaystyle\frac{N}{2(N-1)}\min_{i \neq j}E(\mc{N}_i,\mc{N}_j)$.
    \label{thm:general-multipartite-lower-bound}
\end{theorem}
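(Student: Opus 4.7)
The plan is to implement the two-stage strategy sketched just before the theorem: first use the channels to distribute bipartite EPR pairs between every pair of receivers, and then convert those EPR pairs into $N$-party GHZ states by a purely LOCC combining step. Write $E_{\min} \coloneqq \min_{i \neq j} E(\mc{N}_i,\mc{N}_j)$. For the first stage, I would partition the $n$ uses of each channel $\mc{N}_i$ into $N-1$ equal blocks of $n/(N-1)$ uses, one block devoted to each other receiver $j \neq i$. Together with the matching block of $\mc{N}_j$, I run an optimal EPR distribution protocol for $(\mc{N}_i,\mc{N}_j)$. By the definition of $E(\mc{N}_i,\mc{N}_j)$, for any $\delta>0$ this produces at least $\frac{n}{N-1}\bigl(E(\mc{N}_i,\mc{N}_j)-\delta\bigr) \geq \frac{n}{N-1}(E_{\min}-\delta)$ shared EPR pairs between receivers $i$ and $j$, with trace-distance error vanishing as $n\to\infty$.

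For the second stage, I would split the pool of EPR pairs between each pair $(i,j)$ into two halves of size $\frac{n}{2(N-1)}(E_{\min}-\delta)$ each: one half ``belongs to'' receiver $i$, the other to receiver $j$. Then, for each $i$, party $i$ locally prepares a fresh $N$-party GHZ state and, for every $j\neq i$, teleports the $j$-th qubit to party $j$ using one of its dedicated EPR pairs with $j$. Each such round consumes exactly one EPR pair per partner and produces one shared $\ket{\mathrm{GHZ}_N}$ by standard teleportation, so each receiver acting as ``center'' produces $\frac{n}{2(N-1)}(E_{\min}-\delta)$ GHZ states, giving a total of
\begin{align}
G_n \;\geq\; \frac{N n}{2(N-1)}\bigl(E_{\min}-\delta\bigr).
\end{align}

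The remaining step is to bookkeep the errors. The $\binom{N}{2}$ EPR subprotocols run on disjoint blocks of channel uses and are therefore independent; the overall output state's distance from the ideal tensor product of EPR pairs is bounded by the sum of their individual errors, each of which tends to $0$. The star-teleportation stage is an exact LOCC map taking perfect EPR inputs to perfect GHZ outputs, so by the data-processing inequality for the trace norm it does not increase the error. Combining these observations gives an $(n, Q_n, \epsilon_n)$ GHZ distribution code with $Q_n \to \frac{N}{2(N-1)}E_{\min}$ as $\delta\to 0$ and $\epsilon_n\to 0$, which proves the theorem.

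I do not expect a serious obstacle. The only slightly delicate points are (i) keeping the channel allocation consistent: since $\mc{N}_i$ participates in exactly $N-1$ subprotocols, the uniform $\frac{1}{N-1}$ allocation saturates its budget, and (ii) handling the integer rounding of $n/(N-1)$, which contributes only an $O(1/n)$ correction and is absorbed into $\delta$. The ``$N/(2(N-1))$'' prefactor can be read off cleanly: $\binom{N}{2}$ pairs each contributing $\frac{1}{N-1}E_{\min}$ EPR pairs yields $\frac{N}{2(N-1)} E_{\min}$ worth of ebits per channel use, and because each GHZ state consumes exactly one EPR pair from its star, this is precisely the GHZ rate achieved.
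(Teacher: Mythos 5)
Your proposal is correct and follows essentially the same route as the paper: distribute EPR pairs between every pair of receivers using disjoint blocks of channel uses, then have each receiver act as a local ``center'' preparing a GHZ state and teleporting its qubits via the pairwise EPR pool, which yields exactly the $\frac{N}{2(N-1)}\min_{i\neq j}E(\mc{N}_i,\mc{N}_j)$ counting. Your block-partitioning normalization and explicit error/rounding bookkeeping are just a more detailed rendering of the paper's $\ket{\Phi_N}^{\otimes 2}\rightarrow\ket{\mathrm{GHZ}_N}^{\otimes N}$ conversion argument.
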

\begin{proof}
    For sufficiently large $n$, the central node can distribute $nE(\mc{N}_A,\mc{N}_B)$ EPR pairs to two receivers $A$ and $B$ by $n$ uses of $\mc{N}_A$ and $n$ uses of $\mc{N}_B$. By using $n(N-1)$ times each $\mc{N}_i$, each pair of receivers $A_i$ and $A_j$ will share $nE(\mc{N}_i,\mc{N}_j)$ EPR pairs. Let $\ket{\Phi_N}^{A_1\cdots A_N}\coloneqq\bigotimes_{i\neq j}\ket{\phi_+}^{A_iA_j}$, then the receivers share at least $\min_{i,j} nE(\mc{N}_i,\mc{N}_j)$ copies of $\ket{\Phi_N}$. Now, $\ket{\Phi_N}^{\otimes 2}$ can be converted to $\ket{\mathrm{GHZ}_N}^{\otimes N}$ via LOCC. This is because each $A_i$ can locally prepare an $N$-partite GHZ state and use the EPR pairs between $A_i$ and the other parties to create $\ket{\mathrm{GHZ}_N}$. Two EPR pairs between each $A_i$ and $A_j$ will be used (once when $A_i$ prepares the GHZ state and once when $A_j$ prepares the GHZ state). Therefore, a rate of $\frac{N}{2(N-1)}\min_{i \neq j}E(\mc{N}_i,\mc{N}_j)$ is achievable. 
\end{proof}

\section{Conclusion and Discussion}\label{sect:conclusion}
In conclusion, we study the entanglement distribution scheme where a single central source aims to establish entanglement among $N$ receivers through quantum channels $\mc{N}_1,\cdots,\mc{N}_N$. We define the EPR distribution capacity $E(\mc{N}_1,\mc{N}_2)$ and the GHZ distribution capacity $E(\mc{N}_1,\cdots,\mc{N}_N)$ as the largest asymptotic rate of establishing EPR states among two receivers and $N$-partite GHZ states among the $N$ receivers, respectively. We show that EPR distribution is tightly related to the problem of assisted entanglement distillation. Specifically, we prove a lower bound in terms of the asymptotic entanglement of assistance and show that this bound is tight for teleportation-covariant channels. This result allows us to exactly characterize the EPR distribution capacity of two erasure channels. Then, we give alternative simple lower bounds for the EPR distribution capacity over generic channels as well as the EPR distribution capacity over two generalized amplitude damping channels. Finally, we bound the GHZ-distribution capacity in terms of the minimum quantum capacity and minimum CPP-assisted entanglement generation capacities of the channels. Applying these bounds, we exactly determine the GHZ distribution capacity when the most noisy channel is a dephasing channel.

Our work sheds light on the novel connection between two distinct information theoretic tasks and motivates a deeper understanding of assisted entanglement distillation. Moreover, our work also hints at a better scheme of distributing entanglement. Suppose two parties $A$ and $B$ are connected by an optical fiber, and we assume that the polarization dispersion in the optical fiber are characterized by dephasing channels \cite{liu-2022}. If they wish to share an entangled pair, instead of generating entanglement at one of the parties, say $A$, who sends half of the entangled state through the entire length of the fiber, a better strategy is to generate entanglement from a source $S$ that is located centrally between $A$ and $B$. By virtue of Proposition \ref{prop:dephasing}, in this alternative strategy, the rate of entanglement distribution depends only on the more noisy link between the source and the receivers, treating the other link as noiseless. The analysis of our entanglement distribution scheme shows how it is possible to quantify its performance in terms of information rates. Its operational relevance and the existence of alternative schemes for different scenarios, therefore, make this area of research rich of insights and practical applications which are worth further exploring.

\section*{Acknowledgments}
We are grateful to Koji Azuma for the valuable discussions. This work is supported by the Department of Energy (DOE) Q-NEXT: National Quantum Information Science Research Center.

\bibliographystyle{unsrtnat}
\bibliography{refs}

\end{document}